\documentclass[acmsmall,screen,nonacm]{acmart}

\usepackage[utf8]{inputenc}

\usepackage{amsmath}
\usepackage{amsfonts}
\usepackage{amsthm}
\usepackage{fontawesome}
\usepackage{mathtools}
\usepackage{tcolorbox}
\usepackage[htt]{hyphenat}

\usepackage{proof}
\usepackage{latexsym}
\usepackage{stmaryrd}

\usepackage{mdframed}

\usepackage{enumitem}
\usepackage{graphicx}
\usepackage[normalem]{ulem}
\usepackage{subcaption}

\usepackage{xspace}

\usepackage{booktabs}
\usepackage{array}
\usepackage{multicol}
\usepackage{multirow}
\usepackage{colortbl}
\usepackage{makecell}
\usepackage{tablefootnote}

\usepackage{algorithm}
\usepackage[noend]{algpseudocode}

\usepackage[frozencache,cachedir=minted-main]{minted}
\usepackage{listings}
\usepackage{courier}
\usepackage{refcount}

\usepackage{hyperref}
\usepackage{balance}

\newtheorem{theorem}{Theorem}
\newtheorem{lemma}[theorem]{Lemma}
\newtheorem{definition}[theorem]{Definition}
\newtheorem{proposition}[theorem]{Proposition}

\newcommand{\larrow}[1]{\overset{#1}{\longrightarrow}}
\newcommand{\rr}{\textsc{rr}\xspace}
\newcommand{\bugs}{{18}\xspace}
\newcommand{\confirmed}{{16}\xspace}
\newcommand{\fixed}{{12}\xspace}

\newcommand{\rpremove}[1]{}

\usepackage{tikz}
\usetikzlibrary{calc,tikzmark}

\newcommand{\myparagraph}[1]{\vspace{0.5em}\noindent \textbf{#1}:}
\tcbuselibrary{breakable}

\definecolor{boxcolor}{RGB}{238, 223, 204} %
\DeclareRobustCommand{\mybox}[2][gray!20]{%
\begin{tcolorbox}[   
        breakable,
        left=0pt,
        right=0pt,
        top=0pt,
        bottom=0pt,
        colback=#1,
        colframe=black,
        width=\dimexpr\columnwidth\relax, 
        enlarge left by=0mm,
        boxsep=5pt,
        outer arc=4pt,
        boxrule=.5mm
        ]
        #2
\end{tcolorbox}
}


 \definecolor{vvcolor}{rgb}{0.1,0.5,0.1}    
 \definecolor{mygray}{rgb}{0.9, 0.9, 0.9}

\newcommand{\rqn}[2]{\noindent \textbf{RQ#1}: \emph{#2}}

\newcommand{\code}{\texttt}
\newcommand{\tool}{Fray\xspace}
\newcommand{\diffadd}[1]{{\protect{#1}}} 
\newcommand{\diffrm}[1]{} 

\newif\ifshowrm

\showrmtrue
\showrmfalse

\ifshowrm
\renewcommand{\diffrm}[1]{{\protect\color{red}\sout{#1}}} 
\fi

\newcommand{\cradd}[1]{{\protect\color{blue}{#1}}} 
\newcommand{\crrm}[1]{{\protect\color{red}\sout{#1}}} 
\renewcommand{\crrm}[1]{{}} 
\renewcommand{\cradd}[1]{{\protect{#1}}}

\newcommand{\arxivadd}[1]{{\protect\color{blue}{#1}}} 
\renewcommand{\arxivadd}[1]{{\protect{#1}}} 

\makeatletter
\makeatother
\settopmatter{printacmref=false} 
\renewcommand\footnotetextcopyrightpermission[1]{} 






\begin{document}

\title{\tool: An Efficient General-Purpose Concurrency Testing Platform for the JVM}
\subtitle{Extended Version}

\author{Ao Li}
\email{aoli@cmu.edu}
\orcid{0000-0003-3189-7079}
\affiliation{%
  \institution{Carnegie Mellon University}
  \city{Pittsburgh}
  \country{USA}
}
\author{Byeongjee Kang}
\orcid{0000-0002-2817-5123}
\email{byeongjee@cmu.edu}
\affiliation{%
  \institution{Carnegie Mellon University}
  \city{Pittsburgh}
  \country{USA}
}
\author{Vasudev Vikram}
\orcid{0000-0001-7093-910X}
\email{vasumv@cmu.edu}
\affiliation{%
  \institution{Carnegie Mellon University}
  \city{Pittsburgh}
  \country{USA}
}
\author{Isabella Laybourn}
\orcid{0009-0009-0288-7307}
\email{ilaybour@alumni.cmu.edu}
\affiliation{%
  \institution{Carnegie Mellon University}
  \city{Pittsburgh}
  \country{USA}
}
\author{Samvid Dharanikota}
\email{sdharani@alumni.cmu.edu}
\orcid{0009-0003-1489-2425}
\affiliation{%
  \institution{Carnegie Mellon University}
  \city{Pittsburgh}
  \country{USA}
}
\author{Shrey Tiwari}
\orcid{0000-0001-6697-0219}
\email{shrey@cmu.edu}
\affiliation{%
  \institution{Carnegie Mellon University}
  \city{Pittsburgh}
  \country{USA}
}
\author{Rohan Padhye}
\orcid{0000-0003-4939-033X}
\email{rohanpadhye@cmu.edu}
\affiliation{%
  \institution{Carnegie Mellon University}
  \city{Pittsburgh}
  \country{USA}
}

\begin{abstract}

Concurrency bugs are hard to discover and reproduce, even in well-synchronized programs that are free of data races. Thankfully, prior work on controlled concurrency testing (CCT) has developed sophisticated algorithms---such as partial-order based and selectively uniform sampling---to effectively search over the space of thread interleavings. Unfortunately, in practice, these techniques cannot easily be applied to real-world Java programs due to the difficulties of controlling concurrency in the presence of the managed runtime and complex synchronization primitives. So, mature Java projects that make heavy use of concurrency still rely on naive repeated stress testing in a loop. In this paper, we take a first-principles approach for elucidating the requirements and design space to enable CCT on arbitrary real-world JVM applications. We identify practical \diffrm{limitations across}\diffadd{challenges with} classical design choices described in prior work---such as concurrency mocking, VM hacking, and OS-level scheduling---that affect bug-finding effectiveness and/or the scope of target applications that can be \diffadd{easily} supported.

Based on these insights, we present \emph{\tool{}}, a new platform for performing push-button concurrency testing \diffadd{(beyond data races)} of \diffrm{data-race-free} JVM programs. The key \diffrm{insight}\diffadd{design principle} behind \tool is to orchestrate thread interleavings without replacing existing concurrency primitives, using a \diffrm{novel} concurrency control mechanism called \emph{shadow locking} for faithfully expressing the set of all possible program behaviors. With full concurrency control, \tool{} can test applications using a number of search algorithms from a simple random walk to sophisticated techniques like PCT, POS, and SURW. In an empirical evaluation on 53 benchmark programs with known bugs (SCTBench and JaConTeBe), \tool{} with random walk finds 70\% more bugs than JPF and 77\% more bugs than RR's chaos mode. We also demonstrate \tool{}'s push-button applicability on 2,664 tests from Apache Kafka, Lucene, and Google Guava. In these mature projects, \tool{} successfully discovered 18 real-world concurrency bugs that can cause 371 of the existing tests to fail under specific interleavings. 

We believe that \tool serves as a bridge between classical academic research and industrial practice--- empowering developers with advanced concurrency testing algorithms that demonstrably uncover more bugs, while simultaneously providing researchers a platform for large-scale evaluation of search techniques.

\end{abstract}

\maketitle
\renewcommand{\shortauthors}{A. Li, B. Kang, V. Vikram, I. Laybourn, S. Dharanikota, S. Tiwari, R. Padhye}


\section{Introduction}
\label{sec:intro}

Software testing is the predominant form of validating correctness for large real-world programs due to its simplicity, wide-spread applicability, efficiency, and reproducibility. However, testing \emph{multi-threaded} programs remains challenging in practice, despite the fact that concurrency bugs are among the most difficult to detect and diagnose~\cite{Lu08, Musuvathi08-chess}.

Take Java, which by several metrics is the most popular programming language with native support for concurrent multi-threading~\cite{pl-stackoverflow, pl-github, pl-tiobe}. Let's assume that programmers write test cases to validate the correctness of concurrent programs via assertions (e.g., that a \emph{parallel-sort} operation produces a sorted list). How can we check such properties?  One might assume that developers can leverage 20+ years of academic research on concurrency testing to run state-of-the-art techniques; however, the state-of-the-practice is simply re-running concurrent tests multiple times to check whether some assertion fails~\cite{baeldung-testing-multithreaded, openjdk-jcstress}. This approach is neither effective nor reproducible---for example, Apache Kafka's issue repository is teeming with discussions on concurrency-induced flaky tests as well as hard-to-replicate production failures~\cite{kafka-bugs-flaky}. What's missing here?


\renewcommand\thefigure{\arabic{figure}}
\setcounter{figure}{-1}
\begin{figure}
    \centering
    \includegraphics[width=0.85\linewidth]{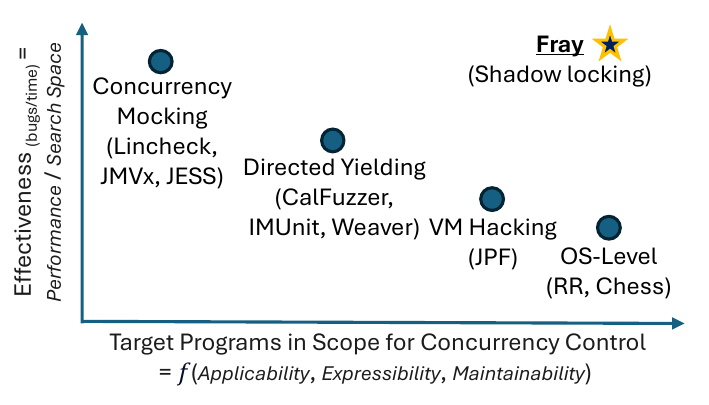}
    \caption{Design trade-off space for JVM controlled concurrency testing platforms. \textbf{Y-axis}: Testing effectiveness (\textit{bugs/time}) is run-time performance (\textit{execs/time}) divided by size of search space (\textit{execs/bug}, given random sampling of interleavings). \textbf{X-axis}: the scope of controlled concurrency testing achievable depends on the \emph{applicability} to arbitrary targets, \emph{expressibility} of the interleaving space, and \emph{maintainability} of the platform itself. \emph{\tool{}} identifies a sweet spot.}
    \label{fig:tradeoffs}
\end{figure}

Let's step back and consider how a concurrency testing tool could help. Ideally, such a system would (a) provide an efficient mechanism for \emph{controlled concurrency}~\cite{Thomson16-sctbench}; that is, to deterministically execute a multi-threaded program along a fixed  \emph{schedule} (i.e., sequence of thread interleavings), (b) provide support for systematically or randomly exploring thread schedules using state-of-the-art search strategies (e.g., \emph{partial order sampling}~\cite{Yuan18-pos}) to uncover hard-to-find concurrency bugs, and (c) target any multi-threaded program without requiring much manual effort (e.g., in rewriting application code or using specialized testing DSLs). Unfortunately, no such general testing framework for the JVM currently exists, despite decades of research on concurrency testing \emph{algorithms}. Why is that?

To understand this gap, we first look to systems described in prior work and map the design space, identifying the subtle trade-offs which make concurrency control of JVM programs challenging.
Classical design choices include
intercepting OS-level thread-scheduling decisions (as in \rr~\cite{rr-chaos}, the record-and-replay tool for Linux), by emulating the JVM completely (as in Java Path Finder (JPF)~\cite{Visser03-jpf}), by mocking concurrency primitives in the JDK (as done by Lincheck~\cite{Koval23-lincheck}, for testing concurrent data structures), or by pausing individual threads to defer their scheduling (as in CalFuzzer~\cite{Joshi09-calfuzzer}). \arxivadd{As shown in Fig.~\ref{fig:tradeoffs} (and detailed in Section~\ref{sec:related}),} these design choices have impacts on (i) the \emph{scope of what can be tested}, owing to \emph{applicability} (i.e., the extent to which arbitrary target programs are supported), the \emph{expressibility} of the search space (i.e., whether all interleavings can be deliberately and faithfully exercised), and the \emph{maintainbility} of the tool itself (i.e., how easily it can keep up with evolving Java versions); as well as (ii) its \emph{bug-finding effectiveness}, which for a given search algorithm (e.g., random walk) depends on maximizing the run-time \emph{performance} of executions and minimizing the \emph{search space} of which interleavings need to be considered.

Crucially, we note that the vast majority of prior academic work in this area has primarily focused on evaluating specific testing techniques or search algorithms, not on maximizing the practical applicability of their artifacts, which has historically been overlooked as an engineering concern. While understandable from a scientific point of view, a side effect is that the gap between research and practice is wider than ever before, with systems presented at OOPSLA 2024~\cite{Schwartz24-jmvx} being limited to Java versions that were superceded in 2017. In contrast, this paper explicitly investigates the research question: ``\emph{Why is concurrency control with managed runtimes challenging, and what are the fundamental system design requirements for maximizing applicability to aribtrary JVM targets?}''


\diffrm{Our key insight is that}\diffadd{In order} to make controlled concurrency testing effective and practically viable for managed code, \diffrm{we need to be able}\diffadd{the key design philosophy we establish is} to orchestrate thread interleavings (i) without replacing existing concurrency primitives with mocks, while (ii)~still encoding the semantics of these concurrency primitives for faithfully expressing the set of all possible program behaviors. 


In this paper, we present \emph{\tool}, \diffrm{a new}\diffadd{the first} concurrency testing platform for the JVM \diffadd{designed explicitly to maximize both general-purpose applicability as well as application-level bug-finding efficiency, while also providing correctness guarantees and a framework for extensibility}.  \tool{}'s objective is to find concurrency-induced assertion violations, run-time exceptions, as well as deadlocks, in programs \diffrm{that are free of data races.}\diffadd{that are otherwise testable.\footnote{That is, the programs have entry points or test harnesses for executing logic that does not heavily depend on external sources of non-determinism such as randomness, timing, or networked I/O.}}
To perform concurrency control, \tool{} \diffrm{introduces a unique}\crrm{\diffadd{employs a}}\cradd{introduces a} mechanism called \emph{shadow locking}, which mediates access to shared resources in a specified order (the ``thread schedule'') with extra locks whose semantics are coupled to concurrency primitives used in the original program.  \diffrm{With this design, \tool{} identifies a sweet spot on the trade-off space (Fig.~\ref{fig:tradeoffs}).}
\tool works on off-the-shelf JVM programs (compiled from Java, Scala, Kotlin, etc.), requiring no manual annotation or source rewriting.  \tool efficiently performs deterministic concurrency control over the space of application-thread interleavings where context switches occur only at synchronization points. 
\tool's search space is \emph{sound} and\diffrm{ \emph{complete}}\diffadd{---}in the absence of data races and other sources of non-determinism (e.g., timers)\diffadd{---also \emph{complete}}; that is, every concurrency bug discovered by \tool will be a true positive, and for every concurrency bug that can manifest in the original program there is a corresponding interleaving that \tool can execute to uncover it. In order to search for concurrency bugs, \tool can run various well-known algorithms such as random walk, probabilistic concurrency testing (PCT)~\cite{Burckhardt10-pct}, and partial-order sampling (POS)~\cite{Yuan18-pos}. For debugging purposes, any saved interleaving can be deterministically replayed \diffadd{as long as the program does not depend on other sources of non-determinism}. 


We empirically show that, as a platform, \tool outperforms currently available alternatives for performing controlled random scheduling---\rr and JPF---in bug-finding effectiveness on 53 programs from independently developed benchmark suites JaConTeBe~\cite{Lin15-jacontebe} and SCTBench~\cite{Thomson16-sctbench} (the latter ported to Java). We also demonstrate \tool's push-button applicability to 2,664 concurrent test cases from mature software projects such as Apache Kafka~\cite{kafka-streams}, Apache Lucene~\cite{lucene}, and Google Guava~\cite{guava}---we believe this is the \emph{largest evaluation of controlled concurrency testing on real-world software}. \tool has successfully identified \bugs distinct concurrency bugs across these projects (\confirmed confirmed and \fixed fixed so far), including both \emph{previously unknown} bugs as well as \emph{known} bugs that the developers could not previously reproduce for debugging. 

\tool{} benefits both practitioners as well as the research community. For example, \emph{Elastic Search Labs} published a blog post about how \tool{} helped them diagnose and fix tricky concurrency bugs in Lucene~\cite{elastic-post}. These bugs were exposed by running existing off-the-shelf unit tests with sophisticated partial-order sampling (POS)~\cite{Yuan18-pos}, which to our knowledge has not been applied at this scale before. Further, we were easily able to implement the bleeding-edge SURW algorithm~\cite{zhao25-surw} in \tool{} and provide a complementary evaluation on thousands of test targets.




To summarize, the contributions of this paper include:

\begin{enumerate}
   
    \item We elucidate the requirements (Section~\ref{sec:problem}) and map the design trade-off space (Section~\ref{sec:related}) for performing practical concurrency control for programs running within the managed environment of a JVM.
    \item We present \tool{}, a new platform for performing efficient concurrency control of JVM programs. We describe \tool{}'s key design \diffrm{contribution of}\diffadd{choices involving} \emph{shadow locking} (Section~\ref{sec:design}) and make the implementation available at \url{https://github.com/cmu-pasta/fray}. 
    \item We empirically evaluate \tool by comparing to \rr and JPF on concurrency-bug benchmarks from prior work, as well as on 2,600+ off-the-shelf tests from real-world Java software (Section~\ref{sec:eval})---the results demonstrate \tool's advantages in terms of performance, bug-finding effectiveness, and push-button applicability.
    \item \tool provides a bridge across academic research and industrial practice, enabling researchers to evaluate advanced concurrency testing algorithms on real-world JVM applications while providing practitioners access to state-of-the-art concurrency testing techniques.
\end{enumerate}

\section{Problem Definition}
\label{sec:problem}

\subsection{Problem Scope}
\label{sec:problem-scope}

Our goal is to find \emph{concurrency bugs} in Java-like programs that perform multi-threading, where the bug is identified by an assertion violation, run-time exception, or a deadlock which only manifests under certain interleavings of threads---that is, the bug is induced by a \emph{race condition}. This use of the term \emph{concurrency bug} follows the style of the seminal work by Lu et al.~\cite{Lu08}, which included atomicity violations, ordering violations, deadlocks, etc. but not \emph{data races}. This nuance is subtle but important.

In Java, the term \emph{data race} refers to concurrent conflicting accesses (i.e., one write and another read/write) to a non-volatile shared variable. Java's weak memory model allows programs with data races to exhibit behavior that cannot be explained by \emph{any} sequence of thread interleavings~\cite{JLS-MemoryModel}. Data races can be effectively identified by race detectors~\cite{Engler03-racerx, Flanagan09, Sen08-racefuzzer, Ocallahan03-hybrid, Serebryany09-threadsanitizer} and can be easily fixed by using proper synchronization to control access to shared memory~\cite{Liu17-VBD}. For our purposes, we assume \diffadd{(but not require)} that programs are \emph{free of data races} since developers can use the aforementioned techniques to remove them.
Data-race-free Java programs exhibit sequential consistency~\cite{JLS-MemoryModel}; so, we can explain concurrency bugs to developers by demonstrating a specific sequence of thread interleavings called a \emph{schedule}.

\begin{figure}[t]
\begin{minted}[fontsize=\scriptsize,
    linenos,
    xleftmargin=6mm,
    escapeinside=@@]{java}
class Foo extends Thread {
  static Object o = new Object();
  static AtomicInteger a = AtomicInteger();
  static volatile int b;
  public void run() {
    int x = a.getAndIncrement(); @\label{line:ex_atomic}@
    synchronized(o) { @\label{line:ex_synchronized}@
      if (x == 0) {
        o.wait(); @\label{line:ex_wait}@
      } else {
        o.notify(); @\label{line:ex_notify}@
      }
    }        
    b = x; @\label{line:ex_volatile}@
  }
  public static void main(...) {
    Foo[] threads = {new Foo(), new Foo()};
    for (var thread : threads) thread.start(); @\label{line:ex_threads}@
    for (var thread : threads) thread.join(); 
    assert (b == 1); @\label{line:ex_assert}@
  }
}
\end{minted}
\caption{Sample Java program containing several concurrency primitives. The program is well synchronized (i.e., no data races) but has concurrency bugs: it can non-deterministically run to completion, deadlock, or trigger an assertion violation.}
\label{fig:sample_program}
\end{figure}



Fig.~\ref{fig:sample_program} depicts a sample (data-race-free) Java program which spawns two threads (Line~\ref{line:ex_threads}), one of which sets local \code{x=0} and the other sets local \code{x=1}  (Line~\ref{line:ex_atomic}). The program often terminates successfully; however, it can non-deterministically deadlock or trigger an assertion violation. For example, if the thread that calls \code{notify()} (Line~\ref{line:ex_notify}) enters the \code{synchronized} block (Line~\ref{line:ex_synchronized}) before the thread that calls \code{wait()} (Line~\ref{line:ex_wait}), then the second thread will wait forever in a deadlock. Otherwise, if the thread that calls \code{notify()} enters the \code{synchronized} block second and then updates the shared \code{volatile} variable \code{b} (Line~\ref{line:ex_volatile}) before the thread waking from \code{wait()} can do so, then the value of \code{b} will end up being \code{0}, triggering an assertion failure (Line~\ref{line:ex_assert}).



\subsection{Concrete Objectives} 
\label{sec:problem-objectives}

We list three concrete objectives (O1--O3) for designing an ideal controlled concurrency testing platform for the JVM.

\myparagraph{O1: Real-World Push-Button Testing} 
We want a concurrency testing system that can run on off-the-shelf JVM programs. This means (a) targeting general-purpose concurrency bugs that violate arbitrary program assertions, instead of only checking specific properties such as linearizability~\cite{Herlihy90-linearizability, Koval23-lincheck} or class thread-safety~\cite{Li19-tsvd}; (b) eliminating the need for developers to manually set up concurrency testing, such as writing specifications in a DSL, rewriting source code to use mocked concurrency libraries, or implementing support for specific frameworks---ideally, we want a drop-in wrapper for the \texttt{java} command; and (c) being able to run on mature real-world software---the latter requires designing a system such that the engineering effort (on the tool maintainer's part) to support a variety of application uses and keep up-to-date with evolving JDK versions is minimized in practice.

\myparagraph{O2: Deterministic and Faithful Concurrency Control}
We want to be able to provide a \emph{thread schedule} during program execution such that it exhibits a specific sequence of thread interleavings; this should allow perfect replay debugging if the only source
of non-determinism in a program is its concurrency. We also
want to be able to control the thread schedule so as to employ
randomized or systematic search strategies for testing. Ideally, we want to ensure (a) \emph{soundness of expressibility}: the program behavior observed via any runnable thread schedule in the controlled testing system should correspond to behavior that can manifest in the original program.
(b) \emph{completeness of expressibility}: any concurrency-induced behavior from the original program can manifest with some expressible thread schedule in the controlled testing system.

\myparagraph{O3: Support for Efficient Search-Based Testing}
In order to achieve efficient concurrency testing for JVM programs, we need to achieve two goals: \emph{performance} and \emph{search-space optimization}. First, we want to minimize run-time overhead when deterministically executing a program along a fixed thread schedule. Typically, this means executing at least the non-synchronizing program instructions (e.g., memory reads/writes) at the same speed as during normal execution. Note that executing a multi-threaded program by scheduling threads one at a time sequentially is acceptable; during concurrency testing, we can parallelize the search algorithms by running different schedules across available CPUs, which maintains overall testing throughput (i.e., execs/time). Second, we want to minimize the search space for concurrency testing by (a) abstracting away the non-determinism within the managed runtime (e.g., VM initialization, garbage collection, class loading), and (b) only considering thread interleavings at synchronization points, which is sufficient for data-race-free programs (ref. Section~\ref{sec:formal}).

\rpremove{Given our problem definition, we have the following specific objectives in designing a solution:}

\section{Design Space and Related Work}
\label{sec:related}

\newcommand{\ideal}[1]{\textbf{\textit{#1}}}

\begin{table*}[t]
    \centering
    \footnotesize
    \caption{Summary of design choices and trade-offs for implementing concurrency control in the JVM. For each design choice, the second column lists the impacted attributes (and corresponding objectives from Section~\ref{sec:problem-objectives} in paranthesis).}
    \begin{tabular}{p{0.15\textwidth}|p{0.15\textwidth}|p{0.62\textwidth}}
         \toprule
         \textbf{Design Choice} & \textbf{Impact (Objectives)} & \textbf{Examples (Prior and Proposed Work)} \\
         \midrule
         OS-level interception 
            & 
                Search Space (O3), \newline
                Performance (O3)
            & 
                \rr~\cite{Ocallahan17-rr} and CHESS~\cite{Musuvathi08-chess} intercept system calls, but cannot distinguish between concurrency in Java application threads vs. JVM internals, and also cannot search over thread schedules without restarting the JVM.
            \\
         \midrule
         VM Hacking 
            & 
                Performance (O3), \newline
                Applicability (O1), \newline
                Maintainability (O1)
            & 
            JPF~\cite{Visser03-jpf} emulates the JVM, losing out on JIT optimizations and needing stubs for all native JDK methods needed for execution. DeJaVu~\cite{Choi98-dejavu} (Java 1.0) was a fork of the Sun JVM that did not keep up with Java versions.
            \\
         \midrule
         Concurrency Mocking
            & 
                Scope (O1), \newline
                Maintainability (O1)
            & Lincheck~\cite{Koval23-lincheck} \crrm{mocks}\cradd{replaces} synchronization primitives \cradd{with custom implementations}, but \diffrm{is limited to}\diffadd{is primarily designed for} testing data-structure linearizability. JESS~\cite{Thomson16-thesis} (Java 7) and JMVx~\cite{Schwartz24-jmvx} (Java 8) use concurrency mocking for general-purpose control, but cannot interoperate with the JVM's use of concurrency primitives in modern Java versions. \\
         \midrule
         Directed Yielding
            & 
                Applicability (O1), \newline
                Determinism (O2), \newline
                Expressibility (O2)
            & CalFuzzer~\cite{Joshi09-calfuzzer}~(Java 6), IMUnit~\cite{Jagannath11-imunit}~(Java 6), and Thread-Weaver~\cite{thread-weaver}~(Java 6) orchestrate thread execution by yielding / blocking based on external hints or specifications about event ordering. Given the relatively simple control mechanism, they cannot express or deterministically induce certain interleavings with \code{wait} and \code{notify}. IMUnit can also produce spurious deadlocks not present in the original program. \\
        \midrule
        \ideal{Shadow Locking} \newline (Our \diffrm{contribution} \diffadd{design})
        & - &
        \ideal{\tool{}}'s \diffrm{unique} design enables push-button testing of arbitrary JVM programs, provides guarantees of deterministic faithful control for all data-race-free programs, and is effective at finding concurrency testing due to its optimized search space and high run-time performance.\\
        \bottomrule
         
    \end{tabular}
    \label{tab:design-space}
\end{table*}

In this section, we walk through various design choices encountered in implementing a concurrency control platform intended to meet objectives O1--O3. In doing so, we discuss systems described in prior work\diffadd{,} \diffrm{and} how their designs lead to trade-offs across various quality attributes that ultimately affect one or more of our stated objectives\diffadd{, and what design choices \tool{} makes in response}. Table~\ref{tab:design-space} summarizes this discussion\diffrm{, and also highlights how our proposed system, \tool{} (described in Section~\ref{sec:design}) overcomes the limitations}.



\myparagraph{OS-level Interception} At one extreme, system-level thread scheduling decisions can be recorded and/or manipulated by intercepting system calls. For example, \rr~\cite{Ocallahan17-rr} was originally designed for record-and-replay of Linux programs, but it can be used for random concurrency testing via its \emph{chaos mode}~\cite{rr-chaos} which shuffles CPU priorities during replay. CHESS~\cite{Musuvathi08-chess} provides systematic concurrency control for Windows programs by intercepting calls to Win32 concurrency APIs; it supports search strategies such as iterative context bounding~\cite{Musuvathi07-icb}. However, for programs running in a managed runtime like the JVM, a system-level approach to concurrency control captures far more scheduling decisions than necessary---for example, the non-determinism within JVM initialization, garbage collection, and class-loading, neither of which affect program semantics---leading to a bloated \emph{search space} (ref. Section~\ref{sec:eval} for an empirical evaluation with \rr). Ideally, we would want a platform that can distinguish application threads from JVM internals.

\diffadd{\vspace{0.2em} \noindent \emph{D1}: For \tool{}, we made the design decision to control only application-level concurrency and ignore the concurrency within the JVM.}

\myparagraph{VM Hacking} At the other extreme end, the Java Virtual Machine itself can be modified or replaced in order to take full control of concurrent execution semantics. The most prominent example of this approach is Java Path Finder (JPF)~\cite{Visser03-jpf}, which was originally designed for model checking but can also be used for random state-space exploration. JPF simulates program execution using a custom bytecode interpreter. This gives JPF full control over program execution semantics but it affects \emph{performance} because it cannot use run-time optimizations (e.g., the HotSpot JIT Compiler). Moreover, JPF requires hand-written implementations of native JDK library methods, limiting \emph{applicability} when certain programs depend on classes whose native methods have not been modeled, while also increasing the cost of tool \emph{maintainability} as the JDK evolves (ref. Section~\ref{sec:eval} for an empirical evaluation). An older system, DejaVu~\cite{Choi98-dejavu}, modified the Sun JVM (Java 1.0) to force determinism; unsurprisingly, the implementation has not kept up with modern Java. 

\diffadd{\vspace{0.2em} \noindent \emph{D2}: For \tool{}, we made the design decision to instrument JVM bytecode so as to run on existing production JVMs.}


\myparagraph{Concurrency Mocking} 
A popular middle ground to concurrency control is to simply substitute language-level concurrency APIs with mocks, either via source rewriting or IR instrumentation. The mocks provide applications the familiar threading and synchronization interfaces, but under the hood they can avoid using the native concurrency primitives completely and instead schedule application-level tasks in a controlled manner.
This approach works really well for languages without managed runtimes. For example, Shuttle~\cite{shuttle1} and Loom~\cite{loom1} use concurrency mocking for testing Rust programs. Kendo~\cite{Olszewski09-kendo} and DThreads~\cite{Liu11-dthreads} provide deterministic multi-threading for C/C++ programs by replacing \code{pthreads}.

However, for managed runtimes like the JVM, the concurrency mocking approach runs into limitations. The main challenge is in dealing with interactions between application code under concurrency control (e.g., a Java program), and the managed runtime, which is not (e.g., native C++ code within a JVM). A tool that replaces concurrency primitives in application code with mocks inevitably runs into issues when the same primitives are manipulated by the JVM, in a way that cannot be mocked.  For example, consider \diffrm{that Java implements \code{synchronized} blocks using locks on (hidden) object monitors, that the JVM implicitly acquires/releases monitor locks when a program invokes \code{Object.wait()}, that the JVM implicitly calls \code{notifyAll()} when a thread terminates, that the JDK internally uses \code{ConcurrentHashMap} during class loading, and so on.}\diffadd{the following facts:

\begin{itemize}
    \item Java implements the \code{synchronized} keyword using locks on (hidden) object monitors~\cite{JLS-ThreadsLocks}, which must follow block-structured access (i.e., matching acquires/releases within the same method body).
    \item The thread co-operation instructions \code{wait()} and \code{notify()} respectively release and acquire nested monitor locks \emph{implicitly} (i.e., the JVM does this automatically)~\cite{JLS-ThreadsLocks}.
    \item When application code calls \code{Thread.join()}, it uses \code{wait()} internally; when that thread terminates, the JVM \emph{implicitly} calls \code{notifyAll()} to wake up joiners~\cite{java-thread-join}.
    \item The JVM can \emph{implicitly} lock a class loader if it is not registered as parallel capable~\cite{openjdk-cls-deadlock}, but it is also common for application code to synchronize on class loader objects \emph{explicitly}~\cite{tomcat-loader}. 
\end{itemize}

\noindent Taken together, these facts have several complex implications for any tool that performs concurrency mocking. For example, if a tool replaces \code{wait} and \code{notify} methods with custom mocks, then it must also replace all use of \code{synchronized} blocks since the mock methods cannot acquire/release the original monitor locks given Java's block-structuring requirements. But using mocks for monitors (i.e., mocking \code{synchronized}) means that there is no longer mutual exclusion during concurrent class loading or initialization when the JVM performs locking on the original monitors of the class loader. And now that \code{Thread.join()} calls a mocked \code{wait()}, it cannot observe thread termination when the JVM implicitly issues a \code{notifyAll()} using the original (non-mocked) signal. Each such interaction needs special handling. When considering thread pools, futures, etc. the list of workarounds needed goes on even longer. 
} In practice, \diffadd{existing} mocking-based tools are explicitly or implicitly restricted in \emph{applicability}, as follows.

Lincheck~\cite{Koval23-lincheck} uses JVM bytecode instrumentation to replace a subset of synchronization primitives with custom \crrm{mocks}\cradd{implementations}\diffrm{. While effective} for testing the linearizability~\cite{Herlihy90-linearizability} of concurrent data structures\diffrm{, Lincheck is specialized for this use case.}~\cite{lincheck-impl}. Lincheck requires an annotated list of data-structure APIs and then it systematically invokes these methods concurrently from a fixed set of custom thread instances. \diffrm{This design does not immediately lend itself to concurrency control of arbtirary applications that create and manage their own threads.} \diffadd{Since Lincheck creates and manages the application threads itself in this use case, it can better control the surface of these application--JVM interactions.}

JMVx~\cite{Schwartz24-jmvx}, a record-and-replay system for the JVM, controls the non-determinism of thread interleaving by utilizing \code{Unsafe} APIs to modify concurrency primitives directly. However, this reliance on \code{Unsafe} APIs limits its \emph{applicability} to Java 8, as these APIs were deprecated and subsequently removed in later Java versions (circa 2017). For record/replay, JMVx must also deal a host of other \emph{maintainbility} challenges to adapt to newer versions of Java (see \cite{Schwartz24-jmvx}, \S3.8--\S3.9).

JESS~\cite{Thomson16-thesis}, a Java port of CHESS, employs a more aggressive approach by replacing all concurrency primitives. It implements \emph{method doubling} to avoid interference with the JVM runtime, ensuring that only application threads run the instrumented code (and not, for example, the classloader). While innovative, this strategy inherently limits its \emph{applicability}, as it cannot handle class constructors, reflection, and virtual methods. These limitations frequently result in VM crashes and unpredictable behavior~(see \cite{Thomson16-thesis}, \S 5.2).

\diffadd{\vspace{0.2em} \noindent \emph{D3}: For \tool{}, we made the design decision to \emph{avoid mocking any existing concurrency primitives} so as to streamline our engineering effort.} 

\myparagraph{Directed Yielding} An alternative to concurrency mocking is to force threads to co-operatively yield execution to a special scheduler at key points (e.g., at the end of a critical section). This strategy is commonly used in tools where some information about interesting interleavings is provided via an external input, such that the scheduler can decide which thread to suspend and which to resume.
For example, CalFuzzer~\cite{Joshi09-calfuzzer} is an ``active testing'' framework for Java~6 that supports various concurrency testing algorithms~\cite{Sen07-rapos, Sen08-racefuzzer, Park08-atomfuzzer, Joshi09-deadlockfuzzer}. Active testing relies on inputs from an initial (imprecise) analysis phase that identifies potentially buggy locations. IMUnit~\cite{Jagannath11-imunit} and Thread-Weaver~\cite{thread-weaver} provide a framework for unit testing of multi-threaded Java code by explicitly specifying event orderings. 

However, there are several limitations with this approach. Directed yielding tools are restricted in \emph{applicability} since they are not push-button. Extending their control mechanisms for systematic testing is not straightforward, due to the complex semantics of primitives like monitors and signals. For example, IMUnit can introduce spurious deadlocks if an infeasible set of orderings is provided~\cite{Jagannath11-imunit}, sacrificing \emph{soundness of expressibility}. CalFuzzer (made for Java 6) attempts to avoid this issue by using a now-unsupported \code{Unsafe} API to query the status of monitors~\cite{calfuzzer-scheduler}, but this is not possible with modern Java. When dealing with Java programs that use \code{wait} and \code{notify}, none of these tools can \emph{deterministically} control which one of multiple waiting threads wakes up when a \code{notify} signal is received. Similarly, some interleavings such as a ``delayed wake-up'' (that is, when a notifying thread releases and re-acquires a monitor lock before a waiting thread is woken up) are impossible to simulate in all of these tools, thus affecting the \emph{completeness of expressibility}.

\diffadd{\vspace{0.2em} \noindent \emph{D4}: For \tool{}, we made the design decision to explicitly encode the semantics of existing concurrency primitives in our thread control mechanism, so as to guarantee soundness and completeness of expressibility.}

\subsection{Other Related Work}

LEAP~\cite{Huang10-leap}, ORDER~\cite{Yang11-order}, and CARE~\cite{Jiang14-care} enable record-replay-debugging of concurrent Java executions via object-centric logging. However, they cannot be used to pick specific thread schedules for concurrency testing.

Researchers have extensively studied the problem of \emph{flaky tests}~\cite{Bell18-deflaker, Lam19-idflakies, Parry21-flaky} that pass or fail non-deterministically. Shaker~\cite{Silva20-shaker} specifically targets flaky test detection due to concurrency issues, but it does not have any mechanism for controlling the non-determinism either for a systematic search or for deterministic replay.

Coyote~\cite{Deligiannis23-coyote} is a concurrency testing framework for C\# programs using the Task Asynchronous Programming (TAP) model. It uses source or binary rewriting to replace concurrency primitives with custom mocks, similar to Lincheck and JESS. Likely due to similar complexities related to interoperability between applications and the runtime, Coyote does not control concurrency for C\# programs that use bare threads instead of the TAP model.

Razzer~\cite{Jeong19-razzer}, SnowCat~\cite{Gong23-snowcat}, Ozz~\cite{Jeong24-ozz}, DDRace~\cite{Yuan23-ddrace}, Conzzer~\cite{Jiang22-conzzer}, MUZZ~\cite{Chen20-muzz}, and RFF~\cite{Wolff24-rff} use various forms of fuzzing to find data races, deadlocks, or concurrency-related program crashes in C/C++ programs or the Linux kernel. Periodical scheduling~\cite{Wen22-period} parallelizes threads in defined periods instead of searching over fixed sequences of interleavings---this is achieved through Linux’s deadline CPU scheduler, which is an OS-level control.

\section{Design of \tool}
\label{sec:design}




In this section, we describe \emph{\tool}, a new platform for concurrency testing of JVM programs designed to meet our specified objectives effectively (ref. Section~\ref{sec:problem} and Table~\ref{tab:design-space}).  

\myparagraph{Overall Architecture} 
\tool is designed to work on arbitrary JVM programs as a drop-in replacement for the \code{java} command \diffadd{or to extend existing JUnit tests with a simple annotation}. \tool instruments the JVM bytecode of the target program on-the-fly and injects its own run-time library, which spawns a separate \emph{scheduler thread}. Depending on provided input parameters, \tool either (a) executes the program many times using a provided search strategy (e.g., random, PCT~\cite{Burckhardt10-pct}, POS~\cite{Yuan18-pos}, and SURW~\cite{zhao25-surw}) and, if a bug is encountered, outputs a file containing the \emph{thread schedule}, or (b) replays a single program execution with a given thread-schedule file for debugging.

\myparagraph{Assumptions and Guarantees}
\diffadd{\tool{} is designed to always be \emph{sound} during testing; that is, every reported bug corresponds to an actual behavior that can manifest in the original program.} \tool{} makes two important assumptions to provide \diffrm{soundness and completeness guarantees, but can still be used if these assumptions are relaxed. Here, \emph{soundness} means that every discovered ``bug'' corresponds to an actual behavior that can manifest in the original program, whereas \emph{completeness}}\diffadd{a guarantee of \emph{completeness}, which} means that every concurrency bug can be reproduced by some thread schedule in its search space (though there is no guarantee that a particular search will find it)\diffadd{, and to support faithful replay for debugging purposes}. First, \tool assumes that the only source of non-determinism in the program is due to concurrency; that is, the program does not make use of \texttt{Random}, system I/O, or any timer (e.g., \code{Thread.sleep})---this is reasonable for test suites, though \tool{} also supports \diffadd{relaxing these assumptions for real-world use cases (ref. Section~\ref{sec:relax-assumptions})}\diffrm{automatically rewriting timers and sleeps to forcibly make the program deterministic at the risk of considering some highly improbable interleavings}. Second, \tool assumes that programs are free of \emph{data races} (ref. Section~\ref{sec:problem-scope} for why this matters). If the target program contains data races, \tool can still be used but its search space might exclude some theoretically observable behaviors \diffadd{(ref. Section~\ref{sec:relax-assumptions} again for relaxing this assumption)}.


\subsection{Concurrency Control in \tool}
\label{sec:instrumentation}

The key design principle in \tool is to orchestrate thread interleavings without replacing existing concurrency primitives with mocks, while also encoding the semantics of these primitives to faithfully express the set of all possible program behaviors. \diffadd{Based on this design philosophy, }\tool implements a \diffrm{novel} protocol called \emph{shadow locking} which ensures that only one application-level thread executes at a given time and allows \tool to control the order in which threads interleave at synchronization points.

\makeatletter
\define@key{FV}{highlightlines}{\edef\FV@HighlightLinesList{#1}}
\makeatother

\newcommand{\shadow}[2]{\mathcal{S}^{#1}_{#2}}
\newcommand{\threadrun}{\mathrm{run}}
\newcommand{\monitor}[1]{\mathrm{monitor}(#1)}
\newcommand{\atomic}[1]{\mathrm{atomic}(#1)}
\newcommand{\volatile}[1]{\mathrm{volatile}(#1)}
\newcommand{\slock}{\shadow{t}{r}}

\newcommand{\waiting}[1]{\mathit{waiting}_{#1}}
\newcommand{\waking}{\mathit{waking}}

\begin{figure}[t]
\begin{minted}[escapeinside=@@, 
    fontsize=\scriptsize,
    linenos,
    xleftmargin=6mm,
    highlightlines={
        \getrefnumber{line:run_shadow_lock},
        \getrefnumber{line:run_shadow_unlock},
        \getrefnumber{line:atomic_shadow_lock},
        \getrefnumber{line:atomic_shadow_unlock},
        \getrefnumber{line:monitor_shadow_lock},
        \getrefnumber{line:monitor_shadow_unlock},
        \getrefnumber{line:volatile_shadow_lock},
        \getrefnumber{line:volatile_shadow_unlock},
        \getrefnumber{line:wait_shadow_count},
        \getrefnumber{line:wait_shadow_unlock},
        \getrefnumber{line:wait_shadow_trylock},
        \getrefnumber{line:wait_shadow_lock_count},
        \getrefnumber{line:wait_shadow_doloop},
        \getrefnumber{line:notify_all_inst},
    }
    ]{java}
class Foo extends Thread {
  static Object o = new Object();
  static AtomicInteger a = AtomicInteger();
  static volatile int b;
  public void run() { @\label{line:run_inst}@
    // Let t = Thread.currentThread()
    @$\shadow{t}{\threadrun}$@.lock() @\label{line:run_shadow_lock}@
    
    @$\shadow{t}{\atomic{a}}$@.lock() @\label{line:atomic_shadow_lock}@
    int x = a.getAndIncrement(); @\label{line:atomic_inst}@
    @$\shadow{t}{\atomic{a}}$@.unlock() @\label{line:atomic_shadow_unlock}@
    
    @$\shadow{t}{\monitor{o}}$@.lock() // monitorenter @\label{line:monitor_shadow_lock}@
    synchronized(o) { @\label{line:synchronized_inst}@
      if (x == 0) {
        int k = @$\shadow{t}{\monitor{o}}$@.getHoldCount();@\label{line:wait_shadow_count}@
        @$\shadow{t}{\monitor{o}}$@.unlock() @$\times$@ k @\textrm{times}@ @\label{line:wait_shadow_unlock}@
        do {@\label{line:wait_shadow_doloop}@
          o.wait();@\label{line:wait_inst}@
        } while (!@$\shadow{t}{\monitor{o}}$@.tryLock());@\label{line:wait_shadow_trylock}@
        @$\shadow{t}{\monitor{o}}$@.lock() @$\times$@ k-1 @\textrm{times}@ @\label{line:wait_shadow_lock_count}@
      
      } else {
        o.notify@\textbf{All}@(); @\label{line:notify_all_inst}@
      }
    } 
    @$\shadow{t}{\monitor{o}}$@.unlock() // monitorexit   @\label{line:monitor_shadow_unlock}@ 
    
    @$\shadow{t}{\volatile{b}}$@.lock() @\label{line:volatile_shadow_lock}@
    b = x; @\label{line:volatile_inst}@
    @$\shadow{t}{\volatile{b}}$@.unlock() @\label{line:volatile_shadow_unlock}@
        
    @$\shadow{t}{\threadrun}$@.unlock() @\label{line:run_shadow_unlock}@
  }
  void main() { ... /* see Figure @\ref{fig:sample_program}@ */ ... }
}

\end{minted}
    \caption{\tool's instrumentation (highlighted) of the program from Fig.~\ref{fig:sample_program}, demonstrating \emph{shadow locking} for concurrency control. Each $\shadow{t}{*}$ is a lock instantiated dynamically by \tool and initially held by \tool's scheduler thread; the shadow lock will be released by \tool when it wants to schedule thread $t$.}
    \label{fig:instrumented-program}
\end{figure}

\myparagraph{Shadow Locking}
\tool instruments program classes to add extra synchronization around (but not to replace) concurrency primitives. Fig.~\ref{fig:instrumented-program} demonstrates how \tool would instrument class \code{Foo} from Fig.~\ref{fig:sample_program} with additional lock operations.

A \emph{shadow lock} $\shadow{t}{r}$ is a lock associated with thread $t$ and resource $r$. Whenever a thread $t$ wants to access resource $r$, it must first acquire $\shadow{t}{r}$. Shadow locks can control access to the monitor of an object $o$, denoted as $\shadow{t}{\monitor{o}}$; access to a volatile variable or atomic value $v$, denoted as $\shadow{t}{\volatile{v}}$ or $\shadow{t}{\atomic{v}}$ respectively; or the permission to start thread execution in the first place, denoted as $\shadow{t}{\threadrun}$ (for all application threads except the \code{main} thread).

Shadow locks are instantiated by \tool dynamically as needed, and immediately acquired by \tool's scheduler thread so that no application thread can acquire them by default. \tool tracks the application's shadow lock operations (such as \code{lock} and \code{unlock}) in order to maintain metadata about threads and their ownership of---or attempts to acquire---shadow locks. \tool{} maintains the following \textbf{\emph{mutual exclusion invariant}}: {if any application thread $t$ owns a shadow lock $\shadow{t}{r}$ for some resource $r$, then no other application thread $t'$ can own the corresponding shadow lock $\shadow{t'}{r}$ for the same resource $r$.} 

Initially, only the main thread is running and it owns no shadow locks. If a thread is running, it will continue to run until it terminates or attempts to acquire a shadow lock (or executes \code{Object.wait()}, details later). In the steady state, \tool maintains the invariant that at most one application thread can be running and all other application threads (if any) are blocked waiting to acquire a shadow lock (or be notified with a signal, details later). When no threads are running, \tool's scheduler picks a next thread $t$ to run based on a specified \textbf{\emph{thread schedule}} (i.e., sequence of threads $t_1, t_2, ...$ to interleave) and releases the shadow lock $\shadow{t}{r}$ that $t$ is waiting to acquire. \tool uses its internal meta-data to ensure that it will only schedule a thread $t$ if it can be guaranteed that $t$ will make progress (i.e., it can acquire the resource $r$). If no such schedulable $t$ exists, then a \emph{deadlock} is reported. When an application thread releases a shadow lock, the \tool scheduler immediately re-acquires it.

To understand how \tool instruments Java programs to insert shadow locks, see Fig.~\ref{fig:instrumented-program}, which is the instrumented version of the example from Fig.~\ref{fig:sample_program}, with \tool's changes highlighted. \emph{Thread start: } Lines~\ref{line:run_shadow_lock} and \ref{line:run_shadow_unlock} are inserted to introduce the shadow lock  $\shadow{t}{\threadrun}$ for controlling thread execution, so that new threads are blocked immediately after \code{Thread.start()} until \tool determines that no other thread is running. \diffadd{This is fine, because under the assumption of data-race-freedom, no event in the newly spawned thread can logically ``happen before'' any event in the creating thread until the latter has reached at least one synchronization point after the call to \code{Thread.start()}.} \emph{Monitor locks: } Lines~\ref{line:monitor_shadow_lock} and \ref{line:monitor_shadow_unlock} wrap a shadow lock $\shadow{t}{\monitor{o}}$ around the \code{synchronized(o)} block (Line~\ref{line:synchronized_inst}). A similar logic applies when a program uses \code{RentrantLock} or \code{Semaphore} classes. \emph{Atomic and volatile: } The atomic increment at Line~\ref{line:atomic_inst} and the volatile memory access at Line~\ref{line:volatile_inst} both propagate information across threads~\cite{JLS-MemoryModel, java-atomic}. \tool conservatively treats these as synchronization points, inserting shadow locks $\shadow{t}{\atomic{a}}$ (at Lines~\ref{line:atomic_shadow_lock} and \ref{line:atomic_shadow_unlock}) and  $\shadow{t}{\volatile{b}}$ (at Lines~\ref{line:volatile_shadow_lock} and \ref{line:volatile_shadow_unlock}) respectively around these accesses. 

\myparagraph{Handling \code{wait} and \code{notify} \diffrm{(and why \tool's solution is unique)}} In Java, the semantics of \code{wait}/\code{notify} make controlling concurrency non-trivial. \code{Object.wait()} can only be invoked by a thread holding a (possibly nested) lock on the object's monitor, and this call releases the lock (to full nesting depth) and blocks the thread. Correspondingly, \code{Object.notify()} can only be invoked by a thread holding a lock on the object's monitor, and when this lock is eventually released the JVM can resume any one thread that was blocked \code{wait}-ing on this object. Similarly, \code{Object.notifyAll()} marks all corresponding \code{wait}-ing threads as ready to be woken up by the JVM. The waiting thread(s) must have been in the middle of a (possibly nested) synchronized block; the lock is thus regained (to its original nesting depth) upon waking. 

Concurrency control of \code{wait}/\code{notify} is hard because these methods are implemented natively in the JVM; from Java code, there is no way to control which thread(s) the JVM will wake up or to deterministically replay its choice later. Prior solutions either re-implement the entire JVM (e.g., JPF)---which introduces performance overhead and technical debt---or replace the call with custom mocks (e.g., Lincheck)---which is problematic, since the mocks cannot update the state of monitor locks in the same way (due to JVM restrictions on \emph{block-structured locking}), and not doing so properly leads to compatibility issues when the JVM implicitly calls \code{wait}/\code{notify} on shared objects (e.g., when a \code{Thread} terminates).


\tool solves this problem using shadow locks as follows. Calls to \code{o.wait()} (e.g. Line~\ref{line:wait_inst} in Fig.~\ref{fig:instrumented-program}) are (1) preceded by a full release of the shadow lock corresponding to \code{o}'s monitor (Lines~\ref{line:wait_shadow_count}--\ref{line:wait_shadow_unlock}), (2) wrapped by a loop (Lines~\ref{line:wait_shadow_doloop}--\ref{line:wait_shadow_trylock}) which only exits when the JVM wakes up the thread and \tool also makes the shadow lock available, as explained in the next paragraph. If the shadow lock is unavailable  (i.e., \code{tryLock()} at Line~\ref{line:wait_shadow_trylock} returns \code{false}), the thread loops back into the \code{wait()}. If available (i.e., \code{tryLock()} returns true), the loop exits, and the thread regains the shadow lock to its original nesting depth (Line~\ref{line:wait_shadow_lock_count}). \tool also changes each invocation of \code{o.notify()} to \code{o.notifyAll()} (Line~\ref{line:notify_all_inst}). This ensures that, instead of the JVM picking one arbitrary thread to resume, all waiting threads are woken, but only one thread $t$, deterministically chosen by \tool, continues execution (by making only its corresponding shadow lock available) while others return to a waiting state. These semantics are equivalent to the original program! The performance cost of the extra wake-ups is not significant enough to slow down \tool relative to other approaches (ref. Section~\ref{sec:eval-performance}).

So how does \tool{} decide when to release the shadow lock for a waiting thread to make it exit the do-while loop? When a thread $t$ executes \code{o.wait()} in the original program, \tool{} adds $t$ to a set of threads called $\waiting{o}$. Then, for an invocation of \code{o.notify()} in the original program, \tool{} deterministically removes one thread from $\waiting{o}$; similarly, for an invocation of \code{o.notifyAll()} in the original program, \tool{} removes all threads from $\waiting{o}$.
At a scheduling step (that is, when no application threads are running), \tool{} can release the shadow lock of a thread only if it is not contained in any $\waiting{o}$ set and if it does not violate the mutual exclusion invariant described earlier. This scheme allows \tool{} to express not just immediate but also \emph{delayed} wake-ups, where a notifying thread releases and re-acquires a monitor lock before the waiting thread can be awoken. Finally, \tool{} can also simulate \emph{spurious} wake-ups (i.e., without a corresponding \code{notify}, which is legal) by encoding a special value in the thread schedule for removing a specific thread from $\waiting{o}$.

\diffadd{To summarize, while \tool{} explicitly encodes the semantics of \code{wait} and \code{notify} by maintaining state about waiting and notified threads, it does not replace the original calls---thereby allowing the application code and JVM to natively inter-operate with shared concurrency primitives such as the original object monitor locks, consistent with our design philosophy.}

\myparagraph{Other concurrency primitives} 
\tool supports all the standard locking primitives like \code{ReentrantLock}, \code{Semaphore}, etc. \tool handles \code{Condition.signal()}/\code{await()} exactly like \code{wait}/\code{notify}, and it handles \code{LockSupport.park}/\code{unpark} as well as \code{CountDownLatch} using similar meta-data but simpler instrumentation. The JDK implements \code{Thread.join()} using \code{Thread.wait()}, where the JVM calls \code{Thread.notifyAll()} upon thread termination; so, \tool does not need special handling. \tool does instrument calls to \code{Unsafe.compareAndSwap*}, just like atomic/volatile accesses, as well as
\code{Thread.interrupt()}, whose details we omit here for brevity.




\subsection{Soundness and Completeness}
\label{sec:formal}

We provide informal sketches of our proofs of expressibility (Section~\ref{sec:problem-objectives}, O2), \cradd{whose details are available in \arxivadd{Appendix~\ref{sec:correctness}}.}\crrm{we plan to subsequently publish as an extended version---the supplementary material contains more details.}

\subsubsection{Soundness} We claim that every concurrency bug encountered by \tool can manifest in the original program. To see why, note that \tool only changes the original program by adding shadow-locking instrumentation as shown in Fig.~\ref{fig:instrumented-program}. Adding extra synchronization does not change program semantics, though it can introduce new deadlocks. \tool's execution deadlocks only if the original program could also deadlock, and this follows from how \tool couples each shadow lock $\shadow{t}{r}$ to the program's attempts to acquire the underlying resource $r$. Apart from synchronization, \tool also inserts a \code{do-while} loop for \code{Object.wait()} and changes \code{Object.notify()} to \code{notifyAll()}, but this is equivalent to the original semantics. 

\subsubsection{Completeness} In Appendix~\ref{sec:formal-semantic}, we have formally modeled a subset of Java operational semantics to prove \tool's completeness guarantee. Using these semantics, sequentially consistent executions of the original program can be represented as \emph{traces}, which are sequences $s_0 \larrow{t_0, e_0} s_1 \larrow{t_1, e_1} s_2 \dots$, where each $s_i$ is a program state, and each $\langle t_i, e_i \rangle$ represents a state transition to $s_{i+1}$ by executing an instruction $e_i$ from thread $t_i$. An instruction can either be a synchronization instruction (e.g., \code{monitorenter} or \code{atomic}) or a thread-local instruction (e.g., \code{read} or \code{branch}), depending on whether or not it creates a happens-before relationship as per the Java memory model~\cite{JLS-MemoryModel}. Let $\mathit{next}(s, t)$ be the instruction that thread $t$ would execute if scheduled at state $s$, or be $\mathrm{undef}$ if the thread $t$ cannot run. Then, we define a \emph{sync-point-scheduled} (SPS) trace which is a special kind of trace where $t_i \neq t_{i+1} \Rightarrow \mathit{next}(s, t_i) \in \{\mathrm{undef}, \mathrm{monitorenter}, \mathrm{wait}, \mathrm{atomic}, \dots\}$; that is, thread $t_i$ yields control to another thread only if it is blocked/terminated or about to execute a synchronization instruction. \tool's shadow locking protocol produces SPS traces. We can then state the completeness theorem:

\begin{theorem} For every assertion violation (evaluated using a predicate made up of thread-local instructions) or deadlock that manifests in a trace $\pi$, there exists an SPS trace $\pi'$ that exhibits the corresponding assertion violation or deadlock.
\end{theorem}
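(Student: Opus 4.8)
The plan is to realize the theorem as a trace-rewriting argument: show that the space of traces is closed under commuting \emph{independent} adjacent transitions, and that every buggy trace can be rewritten, within its equivalence class, into an SPS trace that still exhibits the bug. First I would fix the operational model and make precise when two adjacent steps $s_i \larrow{t_i, e_i} s_{i+1} \larrow{t_{i+1}, e_{i+1}} s_{i+2}$ with $t_i \neq t_{i+1}$ are \emph{independent}, and prove an \textbf{Independence Lemma}: if at least one of $e_i, e_{i+1}$ is a thread-local instruction, then the two steps commute, i.e., there is a state $\tilde{s}$ with $s_i \larrow{t_{i+1}, e_{i+1}} \tilde{s} \larrow{t_i, e_i} s_{i+2}$, reaching the same $s_{i+2}$. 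This is exactly where data-race freedom is essential: a thread-local instruction only touches ordinary (non-synchronizing) memory and its own thread's local state, so the sole way it could fail to commute with a concurrently scheduled step of another thread is a conflicting access to the same location---precisely a data race, which we have assumed away. Conversely, a thread-local instruction never alters monitor, volatile, or atomic state, so it can neither enable/disable nor change the effect of another thread's synchronization instruction.

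Second I would give a normalization procedure. Given a trace $\pi$ exhibiting the bug, call a context switch at position $i$ \emph{premature} if $t_i \neq t_{i+1}$ yet $\mathit{next}(s_{i+1}, t_i)$ is thread-local---exactly the situation forbidden by the SPS condition. To repair the earliest premature switch, I would locate the next transition of $t_i$ in the trace (which, since $t_i$ does not move in between, is precisely that pending thread-local instruction) and bubble it leftward past the intervening steps of other threads via the Independence Lemma, thereby extending $t_i$'s contiguous run; I repeat until $t_i$ becomes blocked, terminates, or reaches a synchronization instruction. Each rewrite preserves the final state, so the bug's witnessing state is preserved. To ensure termination I would exhibit a well-founded measure---e.g., the number of pairs consisting of a thread-local transition together with an earlier same-thread run from which it is separated by foreign transitions---that strictly decreases with each bubble; its fixpoint is by definition an SPS trace $\pi'$.

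Third I would show the bug survives. Because every rewrite only permutes transitions within a Mazurkiewicz equivalence class, it leaves the per-location sequence of reads and writes and the relative order of all synchronization steps unchanged. For an assertion violation, the evaluated predicate is built only from thread-local instructions, so the program state that falsifies it is reproduced in $\pi'$. For a deadlock---where every non-terminated thread is blocked on some resource---I would argue that the blocking structure is determined solely by the order of synchronization transitions, which the rewriting preserves, so the terminal deadlocked configuration of $\pi$ is reached verbatim by $\pi'$.

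I expect the main obstacle to be the synchronization primitives that couple threads---\code{Thread.start}, \code{join}, \code{wait}/\code{notify}/\code{notifyAll}, and \code{interrupt}---since these make one thread's enabledness depend on another's actions, and I must verify that the Independence Lemma still holds when a thread-local instruction is moved across them (and that none of them is misclassified as thread-local). The deadlock case is likewise delicate because deadlock is a \emph{global} property rather than the value of a single variable, so I must establish invariance of the entire blocking structure under reordering rather than merely tracking one predicate. Finally, for non-terminating programs I would restrict the whole argument to the finite prefix of $\pi$ up to the point where the bug manifests, which also keeps the termination measure well-founded.
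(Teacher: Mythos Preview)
Your proposal is correct and follows the same reordering strategy as the paper's sketch---commute independent adjacent steps until the trace is SPS---only developed with more care (an explicit normalization procedure, a termination measure, separate preservation arguments for assertions and deadlocks, restriction to a finite prefix). Two remarks. First, your Independence Lemma is strictly stronger than what the paper states: you commute a thread-local step with \emph{any} step of another thread, whereas the paper only licenses a swap when \emph{both} adjacent steps are non-synchronization instructions. Your version is in fact what the argument requires, since a trace such as $A.\ell_1;\, B.\mathit{sync};\, A.\ell_2;\, A.\mathit{sync}$ (with $\ell_1,\ell_2$ thread-local) is not SPS yet has no adjacent both-thread-local pair to swap; your rule lets $A.\ell_2$ bubble past $B.\mathit{sync}$, which is sound because a thread-local step touches no monitor, volatile, or atomic state. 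Second, a small gap to patch: your left-bubbling step presumes that $t_i$'s next transition actually appears somewhere later in the finite prefix, but in the assertion-violation case a thread other than the asserting one may be preempted mid-run and never rescheduled before the bug manifests. You will need either to append that thread's pending thread-local steps to the prefix or, equivalently by independence, to drop its trailing thread-local run before normalizing; for deadlocks this situation cannot arise, since every non-terminated thread ends blocked at a synchronization point.
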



\begin{proof}
(Sketch) Consider any program trace $s_0 \larrow{t_0, e_0} s_1 \larrow{t_1, e_1} s_2 \dots s_n$. By the assumption that programs are data-race-free, we can swap any pair of consecutive transitions $s_i \larrow{t_i, e_i} s_{i+1} \larrow{t_{i+1}, e_{i+1}} s_{i+2}$, where $t_i \neq t_{i+1}$ and $e_i, e_{i+1}$ are not synchronization instructions, to get a new valid subtrace  $s_i \larrow{t_{i+1}, e_{i+1}} s'_{i+1}  \larrow{t_i, e_i} s''_{i+2}$ such that $s_{i+2} = s''_{i+2}$. In other words, thread-local instructions across different threads are independent and can be reordered without affecting the resultant state. We can thus perform a series of such reorderings on a program trace exhibiting a concurrency bug until we get (a prefix that is) an SPS trace exhibiting the same bug.
\end{proof}

Finally, we can show that \tool's shadow locking protocol can enumerate \emph{all} SPS traces, since \tool is allowed to release a shadow lock for any thread that is able to acquire the underlying resource. Thus, \tool's search space is \emph{complete} under the data-race-free assumption.

\subsection{Scheduling and Search Strategies}
\label{sec:scheduling-search}

Once we achieve full concurrency control---that is, we can choose arbitrary \emph{schedules} to deterministically execute specific thread interleavings---we can now hunt for concurrency bugs by running \emph{search strategies}. In \tool, search strategies are plug-ins that implement the logic for choosing the next schedule to run, given some prior search state. The schedules are instantiated dynamically; that is, the program executes a single thread until it is blocked (e.g., when attempting to acquire a shadow lock). Then, \tool makes a callback to the search strategy, providing the set of \emph{enabled} threads (that is, threads which \tool knows will make progress if scheduled). The search strategy then returns the next thread to execute.

The search strategies can either be systematic (e.g., depth-first search) or randomized. We do not invent any new algorithms; rather, we look to prior work for picking good strategies. Thomson et al.~\cite{Thomson16-sctbench} empirically studied several concurrency testing algorithms including systematic depth-first search, iterative context bounding~\cite{Musuvathi07-icb}, iterative delay bounding~\cite{Emmi11-idb}, probabilistic concurrency testing (PCT)~\cite{Burckhardt10-pct}, and the coverage-guided Maple algorithm~\cite{Yu12-maple}. They found that PCT was most effective, followed by a naive random walk. Subsequently, Yuan et al.~\cite{Yuan18-pos} introduced the partial-order sampling (POS) algorithm that was shown to outperform PCT. Finally, Zhao et al.~\cite{zhao25-surw} very recently introduced the selectively uniform random walk (SURW) technique, which provides strong guarantees on uniformity of sampling over the interleaving space.



So, we currently support the following strategies in \tool: (1) \textbf{Random Walk}---At each scheduling step (i.e., callback), pick one of the enabled threads uniformly at random. (2) \textbf{PCT}---randomize priorities for all threads, and, at each step, schedule the thread $t$ with the highest priority; if this is one of $d$ randomly chosen steps (where $d$ is a user-defined \emph{depth} parameter), then set $t$ to the lowest priority. (3) \textbf{POS}---randomize priorities for all threads, and, at each step, schedule the thread $t$ with the highest priority; also reassign random priorities for all other threads competing for the same resource that $t$ is about to acquire. (4) \textbf{SURW}---takes a list of interesting events as input and, at each step, determines a thread to execute using weights proportional to the number of remaining interesting events per thread. 

\tool{} is designed to be easily extensible with new search algorithms. For example, we encountered the SURW paper (which\diffrm{, at the time of writing, is yet to be}\diffadd{ was recently} presented at ASPLOS'25) much after \tool{}'s initial implementation and evaluation; it took one of the authors only one day and $\sim$200 LoC to implement this algorithm as one of \tool{}'s scheduling strategies.




\diffadd{
\subsection{End-to-End Usage}
\label{sec:usage}
}

\begin{figure}[t]
\begin{minted}[fontsize=\footnotesize,
    linenos,
    xleftmargin=6mm,
    escapeinside=!!]{java}
class FooTest {
  @ConcurrencyTest(/* optional args */)!\label{line:test_annotation}!
  public void testTwoFooThreads() {
    ... /* same code as Foo.main() in Figure !\ref{fig:sample_program}! */ ...
  }
}
// Optional args for @ConcurrencyTest 
//   iterations=<integer> (default is 1000) !\label{line:test_config_iter}!
//   scheduler=[Random|PCT|POS|SURW]Scheduler.class (default is POS) !\label{line:test_config_sched}!
//   replay=<filename> to enable replay mode (default is empty, for testing) !\label{line:test_config_replay}!
\end{minted}
\caption{\diffadd{Typical usage: A code example showing how a user can annotate a JUnit 5 test method to run existing test code with \tool{} performing either testing with an optionally specified scheduler for a number of iterations, or performing replay with a given interleaving file saved by a previous run of \tool{}.}}
\label{fig:sample_test}
\end{figure}

\begin{figure}[t]
\begin{minted}[fontsize=\footnotesize,
    breaklines,
    escapeinside=!!]{text}
Bug found in testTwoFooThreads() iteration 3/1000, you may find detailed report and replay files in /path/to/project/fray-report
Exception java.lang.AssertionError: expected:<1> but was:<0>
	at FooTest.testTwoFooThreads(FooTest.java)
\end{minted}
\caption{\diffadd{Example of console output when \tool{} detects a concurrency bug (after running the test in Fig.~\ref{fig:sample_test}), showing the assertion failure and indicating where detailed reports and replay files can be found for debugging.}}
\label{fig:sample_output}
\end{figure}

\diffadd{


\tool{} can either be run on the command-line or as an extension to JUnit 5 tests with Gradle or Maven plugins that we make available on the Maven Central Repository.

\myparagraph{Command-Line Mode} To use \tool{} to test the program in Fig.~\ref{fig:sample_program}, simply replace \code{java Foo} with \code{fray Foo}.
This will repeatedly run the program using a random-walk thread scheduler until it is terminated or a bug is found. 
Command-line options \code{-{}-scheduler} and \code{-{}-iter} can specify the scheduling algorithm and the number of iterations respectively.

When a test fails, {\tool} provides detailed failure information, including the specific exception encountered and the location where detailed reports and replay files are saved. The user can then run \code{fray -{}-replay=<file> Foo} to perform replay debugging. The replay file records all the random decisions made during the original run, including which thread to execute at each step and when to trigger spurious wake-ups. \cradd{While the recording file is not directly human-readable, we provide an IntelliJ IDEA plugin for visualizing thread interleavings within the IDE~\cite{fray-plugin}.}

\myparagraph{Testing Framework} More commonly, we expect users to integrate \tool{} into their existing test suites that provide testable entry points and which already configured for complex build scenarios. Fig.~\ref{fig:sample_test} demonstrates how a developer can test the \texttt{Foo} class shown in Fig.~\ref{fig:sample_program} using the {\tool} JUnit 5 extension. To execute a test using {\tool}, developers need only annotate the test method with \texttt{@ConcurrencyTest(...)} and optionally specify configuration parameters such as the number of iterations to execute and the scheduling algorithm to employ (Lines~\ref{line:test_config_iter} and \ref{line:test_config_sched}, which also indicate the default choices). When a test fails, {\tool} provides detailed failure information along with a recording file, as shown in Fig.~\ref{fig:sample_output}. For debugging, users can specify the recording file name as an option to the \code{@ConcurrencyTest()} annotation (see Fig.~\ref{fig:sample_test} Line~\ref{line:test_config_replay}). As such, a test method annotated with \code{replay} is like a unit test that executes a single specified thread schedule. This is useful for step-through debugging within an IDE, as the replay configuration is only intended for local debugging and not for commiting to version control. 


\subsection{Relaxing Assumptions}
\label{sec:relax-assumptions}

\myparagraph{Data-Race-Freedom} Since \tool{} assumes that programs are data-race-free by default, performing context switches (via shadow locks) only at synchronization points is sufficient to guarantee completeness of its search space (ref. Section~\ref{sec:formal}). If a program has data races, then \tool{} can still be used but its search space might miss some bugs that only manifest when (1) threads interleave between unsynchronized shared-memory accesses or (2) the application exhibits non-sequentially-consistent behavior. Although \tool{} cannot solve the latter issue, there is some support to address the former: a special flag (\code{-{}-memory}) can be used to force \tool{} to insert shadow locks (and therefore perform context-switching) around \emph{every} shared-memory access, not just the atomic and volatile fields. Naturally, this introduces a large performance cost. We do not explicitly evaluate this mode in the rest of this paper as it is not the primary use case; we believe users should use dedicated data-race detectors that can also address Java's weak memory model before running \tool{}.

\myparagraph{Dealing with non-determinism apart from concurrency} Even if \tool{} is run on generally testable code, there are some forms of non-determinism that are common in real-world use cases. \tool{} handles the most common code patterns that we have observed in open-source projects.

First, test code sometimes uses \code{Thread.sleep()} and applications often use timer-based synchronization primitives, e.g., \code{Object.wait(timeout)}, \code{Condition.awaitNanos()}, or \code{LockSupport.parkNanos()}. \tool{} automatically rewrites the sleep to a \emph{no-op} and the timed waits to basic \code{wait()}/\code{await()}/\code{park()} without timeouts; none of this affects completeness. For the latter group, since \tool{} models ``spurious'' wake-ups that can happen at any time, this covers the behavior where the timeout triggers, given that the two cases are semantically indistinguishable~\cite{java-locksupport-parknanos}.

Second, application code that makes use of \code{Object.hashCode()} can have subtly different behavior across runs (e.g., whether or not two keys in a hash-map collide in the same bucket). To enhance deterministic replay, \tool{}'s recording files save object hash codes in the order in which the application read them and then reproduces these values during the replay phase. Similarly, \tool{} records and replays system clock values (e.g., \texttt{System.nanoTime()}). 

Third, it is common for classes to run static initializers to set up global state. Because this logic only runs once, if there is any concurrency in the static intializers then the saved recordings for schedules encountered during the $n$-th iteration of random testing will not align with the order of events during replay-from-scratch. We provide a special flag (\code{-{}-dummyRun}) in replay mode to run a warm-up iteration before attempting deterministic replay hoping to skip the static initializers.



Finally, \tool{} does not record/replay calls to \code{java.util.Random} or file I/O, since most test code uses these APIs in a manner that does not affect control flow (e.g., generating UUIDs, reading static config files, or logging messages). So, \tool{}'s replay mode simply performs the original calls for all such APIs, at the risk of introducing flakiness if the application deviates from these norms.


}

\subsection{Maintainability}



\tool{} currently runs on Java 21 (the latest long-term support version at the time of writing) and is compatible with targets compiled for older Java versions as well. Unlike some tools such as CalFuzzer or JMVx, \tool{} does not depend on \code{Unsafe} APIs or JVM-vendor-specific features, making it robust to changes in JVM implementations. With \tool's design, we only need to insert shadow locks around the usages of concurrency APIs listed in the language reference~\cite{JLS-ThreadsLocks} and the JDK documentation~\cite{java-concurrent}. \diffadd{For example, we recently added Java 23 support with $\sim$50 lines of code. This was necessary because JDK 23 replaced \texttt{LockSupport.park/unpark} with \texttt{Unsafe.park/unpark} for thread parking operations in certain concurrency primitives, requiring us to update our instrumentation accordingly.}\diffrm{How does this design compare to other approaches when new Java versions are released?}

\diffrm{Figure~\ref{tab:maintainability} shows a qualitative comparison of maintainability across different design choices surveyed in the previous section, excluding \emph{directed yielding} which does not guarantee full expressibility (ref. Section~\ref{sec:related}). The table uses code symbols to indicate areas where maintenance is required when the language specification introduces changes.
(1) When new synchronization primitives are introduced (e.g., semaphores), techniqiues involving VM Hacking, concurrency mocking, and shadow locking all require implementation updates, whereas the OS-level interception approach avoids this maintenance burden due to its language-agnostic nature. (2) When high-level concurrency utility classes such as \texttt{ThreadPool} are introduced---which don't create new concurrency semantics but build upon existing primitives---both VM Hacking and mocking approaches require corresponding implementation updates. For example, JMVx needs special instrumentation and hooks for both \texttt{ThreadPool} and \texttt{ConcurrentLinkedQueue}~\cite{Schwartz24-jmvx}, whereas \tool requires no special handling. (3) For updates to native methods unrelated to concurrency (e.g., \texttt{getStackTrace}), only VM Hacking approaches like JPF require implementation overhead---we have observed this to be a persistent source of incompatibility issues for JPF, as will be evident in Section~\ref{sec:eval}.}


\section{Evaluation}
\label{sec:eval}

Our evaluation aims to compare \tool against a representative concurrency testing tool from each design choice category shown in \arxivadd{Fig.~\ref{fig:tradeoffs} /} Table~\ref{tab:design-space}. Among the available tools, only \rr (\emph{OS-level interception}), JPF (\emph{VM Hacking}), and Lincheck (\emph{concurrency mocking}) are actively maintained and compatible with contemporary versions of Java (11--21). Lincheck, however, is \diffadd{primarily} designed for testing data-structure linearazibility \diffrm{and does not target}\diffadd{instead of} arbitrary Java applications; whereas \rr and JPF are \diffadd{explicitly} general-purpose.
Unfortunately, all of the \emph{directed yielding} tools CalFuzzer, IMUnit, and Weaver support only Java 6, which is deprecated since 2018 and not supported by any of our evaluation benchmarks. To confirm our claims from Section~\ref{sec:related}, we were able to run CalFuzzer on a virtual machine with Java 6 and verified its lack of expressibility in controlling the thread wake-up order after \code{notify()}, as well as the inability to simulate certain interleavings such as delayed and spurious wake-ups.




Our evaluation addresses the following research questions: 

\rqn{1}{How does {\tool} compare to \rr and JPF concurrency control in terms of bug-finding effectiveness on independent benchmarks?}

\rqn{2}{How does {\tool} compare to \rr and JPF in terms of run-time performance on the same benchmarks?}

\rqn{3}{Can {\tool} be effectively applied to real-world software in a push-button fashion? How does it compare to other tools?}

\rqn{4}{Can {\tool} be used for special-purpose testing of concurrent data structures? How does it compare to Lincheck in terms of bug-finding scope?}

\myparagraph{Experimental Setup} 
For RQ1 and RQ2 only---to ensure that we are only comparing different design choices for concurrency control (ref. Table~\ref{tab:design-space}) and not the variations in search strategies, we fix all tools to use a \emph{random} search: ``\tool-Random'' and ``JPF-Random'' explicitly use a random walk to schedule threads, whereas ``\rr-Chaos'' uses \rr's \emph{chaos} mode~\cite{rr-chaos} to randomize thread priorities during record-and-replay. For RQ3 and RQ4, we also run modern search strategies (ref. Section~\ref{sec:scheduling-search}): \tool-PCT3 (with depth $d$=3)\footnote{We choose depth 3 because prior work~\cite{Thomson16-sctbench} showed it to be the most effective for revealing bugs.}, \tool-POS, and \tool-SURW\footnote{We randomly sample 20 memory locations and mark all their access events as interesting, as suggested by the original authors.}. All experiments are performed on an Intel Xeon processor with 16 cores and 187 GB of memory.


\myparagraph{Benchmarks and Metrics} For RQ1 and RQ2, we target concurrency-bug benchmarks from prior work: SCTBench~\cite{Thomson16-sctbench} and JaConTeBe~\cite{Lin15-jacontebe}. SCTBench is a set of micro-benchmarks of concurrency bugs using the \code{pthreads} library and with explicit assertions. For our evaluation, we manually translated the subset of 28 programs with the ``CS'' prefix to Java (as these were self-contained and also used in prior work~\cite{Cordeiro11, Deligiannis23-coyote, Wen22-period, Wolff24-rff})---the translation converted \code{pthreads} mutex lock operations to Java \code{synchronized} blocks. For programs with data races, we replaced unsynchronized memory accesses with corresponding \code{volatile} accesses in the Java translations; this is sound because \diffadd{the bugs intended by the benchmark are other types of race conditions (e.g., atomicity violations) not the data races themselves. So when we mark variables as volatile---as a real user would do if they found the data race via a race detector and then patched it---the atomicity bug still remains.}\diffrm{the benchmarks are designed for sequentially consistent executions and the buggy behavior is retained.} JaConTeBe is a benchmark suite derived from real-world concurrency bugs in open-source Java software~\cite{Lin15-jacontebe}. The original suite includes both application bugs and OpenJDK bugs---we use a subset of 25 bugs whose test cases have oracles and which come from application projects (such as Apache Log4J, DBCP, Derby, and Groovy). 
For each of the 53 benchmark programs, we run each tool's random search for up to 10 minutes and repeat this experiment 20 times to get statistical confidence. 

For RQ1, we measure the wall clock time and the number of executions to find each bug if discovered in some run. 

For RQ2, we measure run-time performance by comparing the number of executions per second on the 53 programs with the randomized search running the whole 10 minutes whether or not a bug is encountered,\footnote{If we stop the search when a bug is found, then the performance appears to be dominated by the JVM startup time when bugs are found quickly.} and then count the number of iterations of the search completed. This approach allows us to reliably measure steady-state performance for all tools.  

For RQ3, we target over 2,600 test cases from real-world open-source software projects: \href{https://github.com/apache/kafka/tree/43676f7612b2155ecada54c61b129d996f58bae2}{Apache Kafka@43676f7} (``streams'' module, 1M LOC), \href{https://github.com/apache/lucene/tree/33a4c1d8ef999902dacedde9c7f04a3c7e2e78c9}{Apache Lucene@33a4c1d} (857K LOC), and \href{https://github.com/google/guava/tree/635571e121ddffa9ff7f86f3a964d4d83ebcbf09}{Google Guava@635571e} (353K LOC). 

For RQ4, we evaluate \tool on the 9 new bugs discovered by Lincheck as reported in its original paper~\cite{Koval23-lincheck}. 

\subsection{RQ1: Bug-Finding Effectiveness}
\label{sec:eval-effectiveness}

\begin{figure}[t]
    \centering
    \begin{subfigure}[t]{\linewidth}
        \centering
        \includegraphics[width=0.80\linewidth]{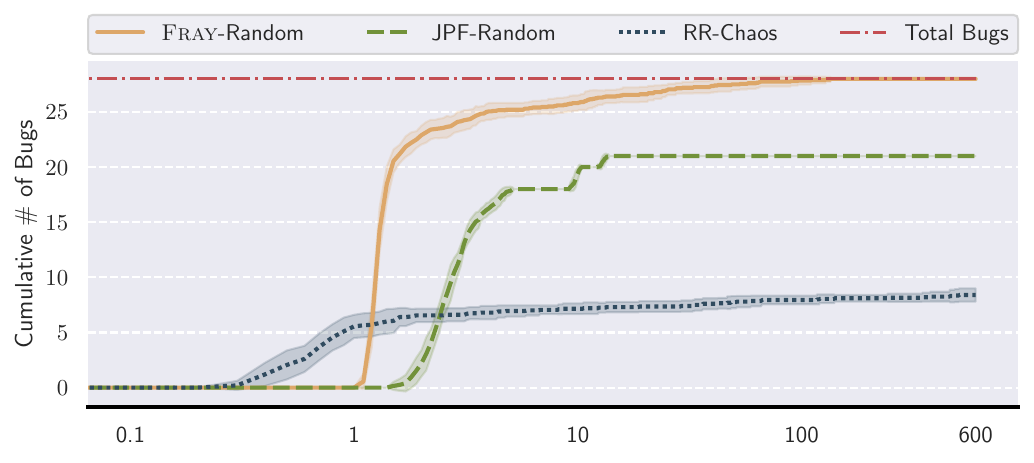}
        \caption{SCTBench}
        \label{fig:bugfinding-sctbench}
    \end{subfigure}
    \begin{subfigure}[t]{\linewidth}
        \centering
        \includegraphics[width=0.80\linewidth]{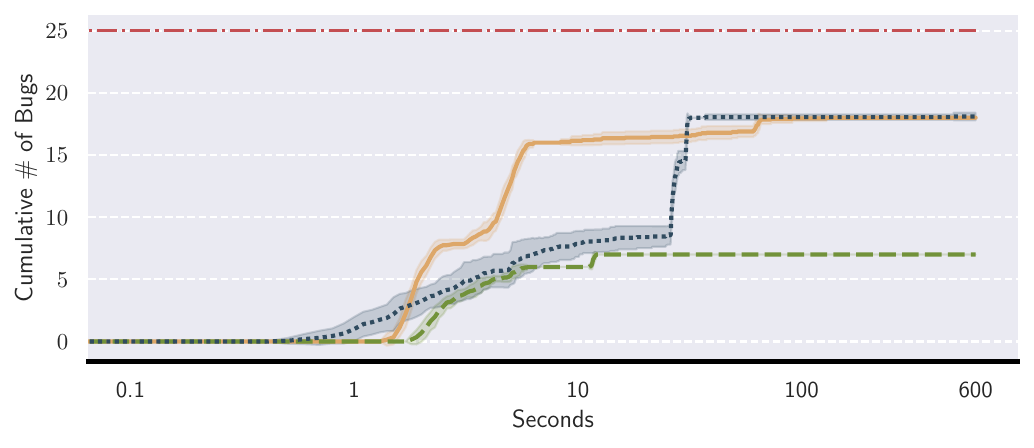}
        \caption{JaConTeBe}
        \label{fig:bugfinding-jacontebe}
    \end{subfigure}%
    \caption{Comparison of bug-finding effectiveness over time (higher and quicker is better) on 53 programs across two benchmark suites, with a search timeout of 10 minutes per target. \diffadd{The y-axis shows the cumulative number of unique bugs found, while the x-axis shows execution time on a logarithmic scale from 0.1 to 600 seconds.} Lines and shaded areas represent means and 95\% confidence intervals, respectively, across 20 repetitions. \diffadd{The "Total Bugs" line (red dash-dot) represents the theoretical maximum discoverable bugs.}}
    \label{fig:bugfinding}
\end{figure}

Fig.~\ref{fig:bugfinding} shows the cumulative distribution of bug-discovery times using the three tools across 53 buggy programs from the two benchmarks, averaging over 20 repetitions. 

In SCTBench (Fig.~\ref{fig:bugfinding-sctbench}), {\tool}-Random is most effective, detecting all 28 bugs after 100 seconds on average. JPF-Random quickly detects 20 bugs within the first 10 seconds on average but cannot identify the remaining 8 bugs even after running for 10 minutes on each target. \rr-Chaos is the least effective, finding only 8 bugs during the experiment. \diffadd{JPF missed one bug due to its incomplete implementation of \code{AtomicLong.set()}. The remaining 6 bugs involve atomicity violations---concurrent read/write operations requiring specific thread interleavings to manifest. JPF likely failed to detect these due to limited exploration caused by its performance overhead (see also \hyperref[sec:eval-performance]{RQ2}). For example, in one benchmark program (\texttt{TwoStage100Bad}) \tool{} explored 14,800 schedules in the time bound whereas JPF could explore only 1,700 schedules.}

In JaConTeBe (Fig.~\ref{fig:bugfinding-jacontebe}), {\tool}-Random and \rr-Chaos show the highest effectiveness, detecting 18 bugs on average. JPF-Random only identifies  7 bugs. It is worth noting that JPF failed to run 16 tests due to compatibility issues often stemming from missing support for certain JDK APIs, highlighting the importance of \emph{applicability}. A burst occurs around 30 seconds for \rr because it lacks a built-in deadlock detector, relying instead on the JaConTeBe harness, which runs its deadlock detector every 30 seconds. Why did {\tool} miss 7 bugs? We identified that 6 bugs are caused by data races (which is out of scope for \tool); of these, JPF identified one and \rr-Chaos identified two. The seventh miss for \tool was a real false negative---an undiscovered deadlock---that JPF also failed to detect but was successfully identified by \rr.







\begin{figure}[t]
    \centering
    \includegraphics[width=0.80\linewidth]{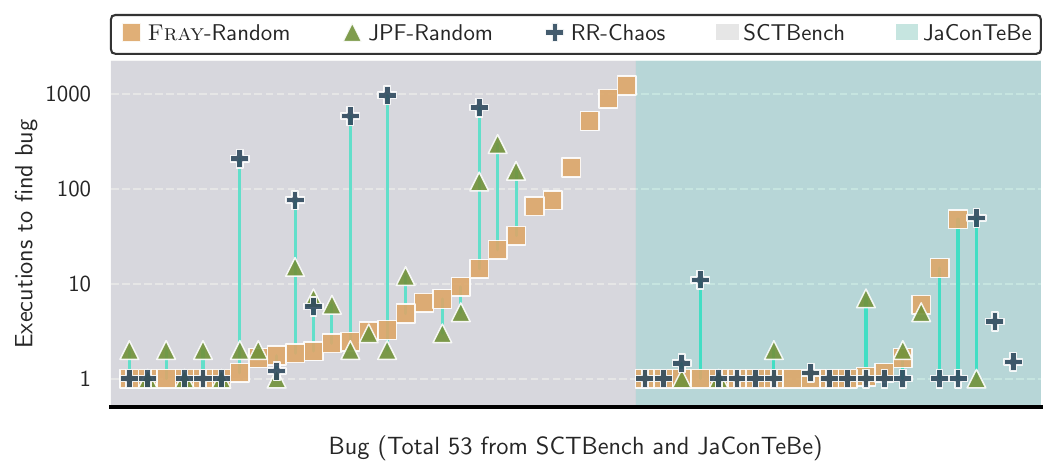}
    \caption{A proxy for search space evaluation: Comparison of number of executions (\diffadd{y-axis}, in log scale, lower is better) needed by the three tools to find each of 53 bugs \diffadd{(x-axis)}, averaged across 20 repetitions. Missing markers imply the tool failed to find the bug within the 10-minute timeout in any repetition. \diffadd{The background shading distinguishes between the two benchmark suites.}}
    \label{fig:searchspace}
\end{figure}

We have claimed that \tool's design optimizes the search space when compared to OS-level concurrency control, since we only focus on application threads. Fig.~\ref{fig:searchspace} evaluates this via a proxy: measuring the average number of executions for the random search to find each bug. As predicted, across both benchmarks, {\tool} and JPF require a similar number of executions to find most bugs. In contrast, \rr-Chaos can sometimes require up to $100\times$ more executions to detect the same bugs. Note that JaConTeBe is designed to be a reproducible bug benchmark; so, some bugs can be found in just 1 iteration.

Overall, {\tool} consistently demonstrates superior bug-finding effectiveness on independent benchmarks, identifying 70\% more bugs than JPF and 77\% more bugs than \rr. JPF is effective when analyzing micro-benchmarks from SCTBench, but suffers from low applicability when analyzing benchmarks derived from real-world software in JaConTeBe.

\mybox{\tool demonstrates superior bug-finding effectiveness, identifying 46 out of 53 concurrency bugs (87\%) across both benchmarks, while JPF detected only 27 bugs (51\%) and rr found 26 bugs (49\%).}

\subsection{RQ2: Run-time Performance}
\label{sec:eval-performance}

\begin{figure}[t]
    \centering
    \includegraphics[width=0.80\linewidth]{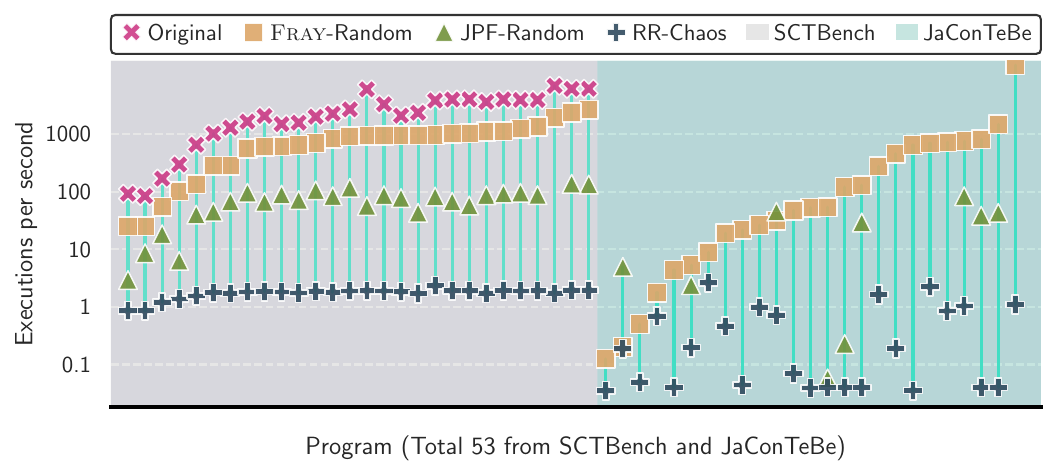}
    \caption{Run-time performance of each tool, measured as average executions per second (\diffadd{y-axis}, in log scale, higher is better) running the random search for 10 minutes on each of the 53 benchmark programs \diffadd{(x-axis)}. Missing markers indicate instances where the tool failed to run the program at all. \diffadd{The background shading distinguishes between the two benchmark suites.}}
    \label{fig:execspeed}
\end{figure}

Fig.~\ref{fig:execspeed} shows the average execution speed for each tool on the 53 benchmark programs. \tool is the most efficient, with a median speedup of $10\times$ over JPF and $457\times$ over \rr.
The performance overhead of JPF is easy to understand: it runs a custom JVM interpreter and so cannot take advantage of JIT optimizations. But what about \rr, which runs the HotSpot JVM natively? First, \rr cannot distinguish the application from the managed runtime, and so it must restart the JVM for each execution. Second, the OS-level concurrency control over the whole JVM process requires recording lots of unnecessary information (e.g., all the non-determinism in the JVM), which is expensive. \tool's use of shadow locks for concurrency control makes it very efficient.


In order to measure the overhead over vanilla uninstrumented (and therefore non-deterministic) execution, we also include a baseline called \emph{Original}, which simply runs the target test repeatedly in a loop for 10 minutes on a single CPU core. However, to do this fairly, we also need to keep the loop going even when we encounter deadlocks---we were only able to do this for SCTBench, by modifying the source code to prevent the actual deadlock, but not JaConTeBe, which is not open-source. The results in Fig.~\ref{fig:execspeed} show that in SCTBench, the instrumentation and scheduling overhead introduced by \tool resulted in a median slowdown of only $3.2\times$ compared to the original, in contrast to $31.5\times$ for JPF.

\mybox{\tool demonstrates superior runtime performance with a median speedup of 10× over JPF and 457× over rr, making it the most efficient among the compared tools for concurrency testing.}

\subsection{RQ3: Applicability to Real-World Software}
\label{sec:eval-applicability}

\newcommand{\texthl}[2]{#2}

\begin{table}[t]
    \centering
    \footnotesize
    \caption{Results of concurrency testing on real-world software. ``Tests run'' represents the number of test cases that can be executed and whose thread interleavings searched over without internal tool errors. The second column shows the number of test failures identified by each technique (and the subset of failures that are caused by wall-clock-based timed waits). The last column presents the total number of reported bugs (along with the subset confirmed by the developers).   
    }
    \begin{tabular}{c|l|lll}
    & \multirow{2}{*}{Technique} & Test Run & Test  Failures & Reported Bugs  \\
    & & (10 mins each) & (Time-related) & (Confirmed) \\
    \midrule
    \multirow{6}{*}{\rotatebox{90}{Kafka-Stream}} 
    & $\textsc{\tool}$-PCT3 & 279 & \texthl{LimeGreen}{126} (60) & 6 (6)\\
    & $\textsc{\tool}$-POS & 279 & \texthl{OliveGreen}{162} (66) & 10 (9) \\
    & $\textsc{\tool}$-SURW& 279 & \texthl{OliveGreen}{165} (90) & 8 (8) \\
    & $\textsc{\tool}$-Random & 279 & \texthl{LimeGreen}{92} (10) & 8 (8) \\
    & JPF-Random & \texthl{red}{0} & \texthl{red}{0} (0) & 0 (0) \\
    & \rr-Chaos & \texthl{LimeGreen}{278} & \texthl{red}{4} (1) & 2 (2) \\
    \midrule
    \multirow{6}{*}{\rotatebox{90}{Lucene}} & $\textsc{\tool}$-PCT3 & \texthl{OliveGreen}{1186} & \texthl{LimeGreen}{3} (3) & 0 (0) \\
    & $\textsc{\tool}$-POS & \texthl{OliveGreen}{1186} & \texthl{OliveGreen}{7} (3) & 4 (3) \\
    & $\textsc{\tool}$-SURW & \texthl{OliveGreen}{1186} & \texthl{OliveGreen}{4} (4) & 0 (0) \\
    & $\textsc{\tool}$-Random & \texthl{OliveGreen}{1186} & \texthl{red}{3} (3) & 0 (0) \\
    & JPF-Random & \texthl{red}{0} & \texthl{red}{0} (0) & 0 (0) \\
    & \rr-Chaos & \texthl{LimeGreen}{1179} & \texthl{red}{0} (0) & 0 (0) \\
    \midrule
    \multirow{6}{*}{\rotatebox{90}{Guava}} 
    & $\textsc{\tool}$-PCT3 & \texthl{OliveGreen}{1199} & \texthl{OliveGreen}{194} (135) & 3 (3)\\
    & $\textsc{\tool}$-POS & \texthl{OliveGreen}{1199} & \texthl{OliveGreen}{194} (135) & 3 (3) \\
    & $\textsc{\tool}$-SURW & \texthl{OliveGreen}{1199} & \texthl{OliveGreen}{199} (137) & 3 (3) \\
    & $\textsc{\tool}$-Random & \texthl{OliveGreen}{1199} & \texthl{OliveGreen}{196} (137) & 3 (3)\\
    & JPF-Random & \texthl{red}{0} & \texthl{red}{0} (0) & 0 (0)  \\
    & \rr-Chaos & \texthl{LimeGreen}{1191} & \texthl{red}{1} (1) & 0 (0) \\
    \end{tabular}
    \label{tab:test-result}
\end{table}

We (attempt to) run all tools, including all search strategies of \tool, on \emph{every} unit test that spawns more than one thread from the suites of Kafka-streams~\cite{kafka-streams}, Lucene~\cite{lucene}, and Guava~\cite{guava}---a total of 2,664 distinct test cases (each running a 10-minute randomized search over schedules). To the best of our knowledge, this is the \emph{largest evaluation of controlled concurrency testing on real-world software}.

Table~\ref{tab:test-result} shows the results of these experiments\arxivadd{; Appendix~\ref{sec:buglist} lists the issue IDs of bugs identified by \tool}. The table is divided into sections representing each of the target projects: Kafka-Streams, Lucene, and Guava. The column ``Tests Run'' indicates the number of tests that a given tool could run at all in a push-button fashion, without crashing with an internal error. Note that the tests are \emph{expected} to pass as they have been regularly running in the projects' corresponding CI pipelines. The next column counts the subset of test runs in which the search strategy found a concurrency bug; that is, the test can be shown to fail (by an assertion violation, run-time exception, or deadlock) due to a race condition. All such failures can be deterministically reproduced with a fixed schedule. A subset of these failures depends on the timers used by the tests (most commonly, \code{Object.wait(timeout)})---we identify these explicitly since some of the failures may be highly improbable in practice. Failing tests have to be analyzed manually to deduplicate underlying bugs. The last column shows the number of unique bugs reported to (and confirmed by) the developers.

The results show that \tool excels in both applicability and effectiveness. First, \tool can run all the real-world software tests in a push-button fashion. Second, even \tool-Random identifies interleavings that cause failure in 291 tests (vs. only 5 by \rr-Chaos). However, both \tool-POS and \tool-SURW work better, finding failures in 360+ tests each---requiring an average of 190 and 82 iterations respectively to search for them. JPF cannot run \emph{any} of the 2,664 tests; it always fails with internal errors due to unimplemented native methods or unsupported JDK features.

We manually deduplicated the bugs identified by the \tool and reported \bugs distinct bugs to project maintainers. For each bug, we have reported detailed instructions for reproduction. Developers have confirmed \confirmed bugs at the time of writing (and \fixed have been fixed); the rest are awaiting triaging. Among the \bugs bugs we reported, six are caused by atomicity violations, five by order violations~\cite{Lu08}, one by unhandled spurious wakeups, five by thread leaks, and the remaining one is yet unclassified. Note that some of the tests in which we observed bugs were encountered as flaky failing tests in past issues, but developers had not identified a way to reproduce and determine the root cause. \tool{} unlocks the capability to reliably identify and debug such failures.

\mybox{In the evaluation of controlled concurrency testing on real-world software involving 2,664 distinct test cases, \tool demonstrates superior applicability and effectiveness by successfully running all real-world tests in a push-button fashion and identifying 360+ test failures using either POS or SURW, while other approaches either found significantly fewer failures (e.g., \rr-Chaos found only 5) or failed to run any tests at all (JPF).}



\subsection{RQ4: Linearizability of Data Structures}
\label{sec:eval-lincheck}


Lincheck~\cite{Koval23-lincheck} is a state-of-the-art specialized concurrency testing framework focusing on \emph{linearazibility}~\cite{Herlihy90-linearizability} of concurrency data structures (e.g., \code{ConcurrentHashMap}), determining whether a set of concurrent method calls have the same effect as some series of sequential method calls. For this purpose, Lincheck provides a lightweight interface for declaring data-structure APIs and automatically converts this spec to a test runner that includes the sequence of API calls and the test oracle.  While the previous RQs evaluated \tool{}'s capabilities for general-purpose concurrency testing, we would like to (1) evaluate whether \tool can also be used for testing of concurrent data structures, and (2) conceptually compare the bug-finding scope of \tool and Lincheck.

We evaluate \tool on the 9 concurrent data structures in which the original Lincheck paper~\cite{Koval23-lincheck} reports new bugs. For each of these data structures, Lincheck reports the sequence of API calls that results in the linearizeability bug. To allow \tool{}, JPF, and \rr to also attempt to find these bugs, we manually translated each of these sequences of API calls into test drivers and added assertions that the results confirm to that of a linearizable execution.

Using the same 10-minute search as before, {\tool}-POS finds 8 out of 9 bugs, compared to {\tool}-PCT3 (7 bugs), {\tool}-SURW (6 bugs), {\tool}-Random (6 bugs), JPF-Random (2 bugs), and \rr-Chaos (2 bugs). The only bug {\tool}-POS misses is a liveness error in Kotlin's \code{Mutex}: Kotlin implements \emph{coroutines} by interleaving multiple logical tasks on the same user-space thread, and context switching at blocking I/O operations. For {\tool}, this program appears single-threaded so it cannot explore any interleavings. 

While \tool successfully found interleavings that trigger 8 out of 9 linearizability bugs, it required manually constructed test drivers to do so. This highlights the complementary nature of these tools: Lincheck offers a powerful, specialized approach for concurrent data structure testing, whereas \tool provides robust general-purpose concurrency testing capabilities. \tool's broader scope enables it to effectively test complex concurrent applications that create and manage their own threads, such as Kafka and Lucene, which falls outside Lincheck's \diffadd{original} scope.

\mybox{\tool{} is capable of finding interleavings that result in linearizability bugs when provided the test driver. We believe that \tool{} and Lincheck serve complementary purposes in the concurrency testing ecosystem--\tool excels at general-purpose concurrency testing of existing developer-written tests, while Lincheck specializes in end-to-end testing of concurrent data structures.}

\section{Case Studies}\label{sec:case-study}

\begin{figure}[t]
\begin{subfigure}[t]{0.38\textwidth}
\vspace{0pt} 
\begin{minted}[fontsize=\footnotesize,
    linenos,
    xleftmargin=6mm,
    escapeinside=@@]{java}
class KafkaStream {
  volatile State state;
  boolean close() {
    if (state == DONE) {
        return true;
    }
    setState(PENDING);
    //... 
    setState(DONE);
  }
  synchronized void
  setState(State newState) {
    if (newState == PENDING) {
      assert(state != DONE); 
      state = newState;
    } else if (...) {}
} }
\end{minted}
\caption{Simplified \texttt{KafkaStream} implementation that throws \texttt{AssertionError} when two threads try to close the stream concurrently.}
\label{fig:kafka_streams}
\end{subfigure}
\hfill
\begin{subfigure}[t]{0.55\textwidth}
\vspace{0pt}
\begin{tikzpicture}[
    node distance=0.7cm,
    thread/.style={draw, rounded corners=2pt, minimum width=2.1cm, align=center, font=\footnotesize, fill=gray!5},
    operation/.style={draw, rounded corners=1pt, minimum width=2.1cm, minimum height=0cm, align=center, font=\footnotesize},
    t1op/.style={operation, fill=yellow!15, text opacity=1},
    t2op/.style={operation, fill=blue!10, text opacity=1},
    error/.style={operation, fill=red!20, text opacity=1},
    arrow/.style={-stealth, thick, shorten >=2pt, shorten <=2pt},
    note/.style={draw, rounded corners=3pt, fill=gray!5, align=left, font=\footnotesize, text width=2.3cm},
    timeline/.style={dashed, gray!70, line width=0.5pt}
]

\node[draw, rounded corners=5pt, fill=gray!5, align=center, minimum width=5cm, font=\footnotesize] (initial) at (0, 5.7) {Initial state: \texttt{state == RUNNING}};

\node[thread] (t1) at (-1.3, 4.9) {Thread 1};
\node[thread] (t2) at (1.3, 4.9) {Thread 2};

\node[t1op] (t1_op1) at (-1.3, 4.2) {\texttt{close:4}};
\node[t1op] (t1_op2) at (-1.3, 3.2) {\texttt{close:7}};
\node[t1op] (t1_op3) at (-1.3, 2.2) {\texttt{close:9}};

\node[t2op] (t2_op1) at (1.3, 3.7) {\texttt{close:4}};
\node[t2op] (t2_op2) at (1.3, 1.7) {\texttt{close:7}};
\node[error] (t2_op3) at (1.3, 0.7) {\texttt{setState:14}};

\draw[arrow] (t1_op1) -- (t1_op2);
\draw[arrow] (t1_op2) -- (t1_op3);
\draw[arrow] (t2_op1) -- (t2_op2);
\draw[arrow] (t2_op2) -- (t2_op3);

\node[note, anchor=west] at (2.5, 3.2) {State changes to \texttt{PENDING}};
\draw[->] (2.5, 3.2) -- (t1_op2.east);

\node[note, anchor=west] at (2.5, 2.2) {State changes to \texttt{DONE}};
\draw[->] (2.5, 2.2) -- (t1_op3.east);

\node[note, anchor=west] at (2.5, 0.7) {Assertion checks if \\\texttt{state != DONE}};
\draw[->] (2.5, 0.7) -- (t2_op3.east);

\end{tikzpicture}
\caption{Execution trace demonstrating the race condition: Thread 1 successfully transitions the state from \texttt{RUNNING} through \texttt{PENDING} to \texttt{DONE}, while Thread 2 attempts to set the state to \texttt{PENDING} after it is already \texttt{DONE}, triggering the assertion failure in the \texttt{setState} method. }
\label{fig:stream_execution_trace}
\end{subfigure}
\caption{Race Condition in Kafka Stream Close Operation}
\end{figure}


\myparagraph{KAFKA-17379} Fig.~\ref{fig:kafka_streams} shows the simplified implementation of \texttt{KafkaStream} and its \texttt{close method}. The \texttt{KafkaStream} is designed as a state machine and transitions through different states during its lifecycle. A concurrency bug arises when two threads attempt to close the same stream. As shown in the Fig.~\ref{fig:stream_execution_trace}, when Thread 1 successfully changes the state from RUNNING to PENDING and then to DONE, Thread 2 can still attempt to set the state to PENDING. This violates the assertion in the \texttt{setState} method that checks whether the state is not already DONE when transitioning to PENDING. The bug manifests as an \texttt{AssertionError} when Thread 2 reaches this assertion check. This race condition occurs because the \texttt{close} method does not acquire proper synchronization before checking and modifying the state.

This bug, introduced approximately four years ago, remained hidden until \tool detected it. Notably, the developers had already implemented tests capable of triggering the bug, but without any mechanism for controlled concurrency testing the buggy interleaving was never identified.
The bug turned out to be tricky to fix. After our report, the developers iterated on fix strategies for over a week (over 26 comments and replies in the GitHub issue) and ultimately settled on redesigning the existing state transition model. Moreover, we were able to help the developers confirm that the fix did not introduce new concurrency bugs by re-running the patched implementation with \tool.



It is worth noting that in our experiments, this bug was only detected (within the 10-minute time bound) using the POS scheduling algorithm, highlighting that the advances in concurrency testing algorithms indeed benefit practical bug detection in complex concurrent applications. Despite the theoretical existence of such algorithms for years, developers lacked practical testing frameworks to apply them effectively in real-world scenarios.

\myparagraph{LUCENE-13571} Fig.~\ref{fig:lucene_code} shows the simplified implementation of \texttt{DocumentsWriterDeleteQueue}, which demonstrates an atomicity violation bug~\cite{Lu11} that caused an assertion failure in Apache Lucene. The bug occurs when multiple threads access the queue concurrently. As shown in the execution trace, Thread 1 and Thread 2 both attempt to increment sequence numbers, but without proper synchronization between queue advancement and sequence number generation. When the queue's maximum sequence number is set by one thread, another thread may continue incrementing the sequence counter beyond this maximum, triggering the assertion error.

Although the race condition appears evident in the simplified code, it manifests with considerably more subtlety in the actual implementation due to complex interactions between queue deletion and reuse. A flaky test failure was initially documented in February 2024; however, it remained unresolved as developers encountered significant challenges in reproducing the issue and could not isolate the root cause. Despite implementing several proposed fixes, the issue persisted. \tool rediscovered the test failure while running concurrency tests in the Lucene code base and provided developers with a deterministic execution trace. With the trace, developers successfully identified the root cause of the failure and implemented an effective fix, resolving an issue that had persisted for six months since its initial discovery.

After the issue was fixed, we received an email from the developers saing: ``With all the hype around LLMs, etc. it's refreshing to see practical and cutting edge research in something as useful and powerful as debugging concurrent programs on the JVM.'' Subsequently, Elastic Search Labs also invited us to give a talk about our work and published a blog post~\cite{elastic-post} about how \tool{} helped find the Lucene bug.

\begin{figure}[t]
\begin{subfigure}[t]{0.38\textwidth}
\vspace{0pt} 
\begin{minted}[fontsize=\footnotesize,
    linenos,
    xleftmargin=6mm,
    escapeinside=@@]{java}
class Queue {
  volatile long maxSeq = 2;
  volatile long seq = 1;
  Queue advance(int ops) {
    maxSeq = seq + ops;
  }
  long next() {
    if (seq + 1 > maxSeq) {
        advance();
    }
    long current = ++seq;
    assert seq <= maxSeq;
    return current;
  }
}
\end{minted}
\caption{Simplified \texttt{DocumentsWriter- DeleteQueue} implementation that throws \texttt{AssertionError} when two threads try to acquire a sequence number concurrently.}
\label{fig:lucene_code}
\end{subfigure}
\hfill
\begin{subfigure}[t]{0.6\textwidth}
\centering
\vspace{0pt}
\begin{tikzpicture}[
    node distance=0.7cm,
    thread/.style={draw, rounded corners=2pt, minimum width=2.1cm, align=center, font=\footnotesize, fill=gray!5},
    operation/.style={draw, rounded corners=1pt, minimum width=2.1cm, minimum height=0cm, align=center, font=\footnotesize},
    t1op/.style={operation, fill=yellow!15, text opacity=1},
    t2op/.style={operation, fill=blue!10, text opacity=1},
    error/.style={operation, fill=red!20, text opacity=1},
    arrow/.style={-stealth, thick, shorten >=2pt, shorten <=2pt},
    note/.style={draw, rounded corners=3pt, fill=gray!5, align=left, font=\footnotesize, text width=2.3cm},
    timeline/.style={dashed, gray!70, line width=0.5pt}
]

\node[draw, rounded corners=5pt, fill=gray!5, align=center, minimum width=5cm, font=\footnotesize] (initial) at (0, 5.7) {Initial state: \texttt{seq == 1}};

\node[thread] (t1) at (-1.3, 4.9) {Thread 1};
\node[thread] (t2) at (1.3, 4.9) {Thread 2};

\node[t1op] (t1_op1) at (-1.3, 4.2) {\texttt{next:8}};
\node[t1op] (t1_op2) at (-1.3, 3.2) {\texttt{next:11}};
\node[t1op] (t1_op3) at (-1.3, 2.2) {\texttt{next:12}};

\node[t2op] (t2_op1) at (1.3, 3.7) {\texttt{next:8}};
\node[t2op] (t2_op2) at (1.3, 1.7) {\texttt{next:11}};
\node[error] (t2_op3) at (1.3, 0.7) {\texttt{next:12}};

\draw[arrow] (t1_op1) -- (t1_op2);
\draw[arrow] (t1_op2) -- (t1_op3);
\draw[arrow] (t2_op1) -- (t2_op2);
\draw[arrow] (t2_op2) -- (t2_op3);

\node[note, anchor=west] at (2.5, 3.2) {Update \texttt{seq} to 2}; 
\draw[->] (2.5, 3.2) -- (t1_op2.east);

\node[note, anchor=west] at (2.5, 1.7) {Update \texttt{seq} to 3}; 
\draw[->] (2.5, 1.7) -- (t2_op2.east);

\node[note, anchor=west] at (2.5, 0.7) {Assertion checks if \\\texttt{seq <= maxSeq}};
\draw[->] (2.5, 0.7) -- (t2_op3.east);

\end{tikzpicture}
\caption{Execution trace showing the race condition between two threads. The race occurs when both threads try to get a sequence number from the same queue, leading to an assertion failure when the sequence number exceeds the maximum allowed value.}
\label{fig:lucene_trace}
\end{subfigure}
\caption{Race Condition in Lucene Document Writer.}

\end{figure}




\section{Discussion}
\label{sec:discussion}

\subsection{Threats to Validity}
\label{sec:discussion-threats}

In our evaluation, one threat to \emph{construct validity} arises from our porting of SCTBench to Java. We mitigated this threat by only targeting a subset of SCTBench which was standalone and could be validated manually. Our threat to \emph{internal validity} stems mainly from the fact that \rr and JPF were originally designed for different use cases (record-and-replay and model checking respectively); so, even though they support random testing they might not be engineering to do that optimally. We mitigate this threat by relying not only on time-to-bug-discovery, but also evaluating search space (Fig.~\ref{fig:searchspace}, where JPF is quite comparable to \tool) and applicability (Table~\ref{tab:test-result}, where the inability to run random testing also implies the inability to run JPF's model checking or \rr's replay debugging). Our evaluation naturally has a threat of \emph{external validity}---while we demonstrated \tool's effectiveness on a number of benchmarks and real-world targets, we cannot make scientific claims about \tool's superiority on every target in general.

\subsection{Limitations}
\label{sec:discussion-limitations}

As evidenced by our evaluation, \tool can miss bugs in programs with data races, and cannot interleave single-threaded co-routines. Additionally, \tool's handling of timed waits and sleep statements introduces practically improbable schedules. For example, consider a thread that sleeps for 10 seconds, waiting for a flag to be updated by another thread. According to the JLS~\cite{JLS-MemoryModel}, ``neither \code{Thread.sleep} nor \code{Thread.yield} have any synchronization semantics''; thus, a schedule where the sleep completes before the other thread has made any progress is valid. However, we have observed some developers being unwilling to fix such issues given their improbability. In future work, we plan to use statistical methods or allow user configurations to suppress such warnings. 




\subsection{Insights on ``Applicability''}
\label{sec:discussion-applicability}

While past work on concurrency testing has focused mainly on optimizing performance and search strategies, our research highlights the importance of focusing system design on ``applicability'', which hitherto has been overlooked as simply an implementation detail. To find real bugs, you have to be able to run a tool on real software, and minimize developer effort to manually adapt their software for the testing tool. To do this, the tools have to be able to support arbitrary program features. Even supporting say 90\% of language features does not mean that you can test 90\% of real-world targets; the inability of JPF to run on most targets in our study (ref. Table~\ref{tab:test-result}) is evidence of this--any mishandled feature can cause the application or JVM to crash and the test to be meaningless. 

We observed that the limits to applicability inherent in tools like JPF and JMVx stem from the need to support various interactions between applications and the JVM through shared concurrency primitives (ref. Section~\ref{sec:related}). Due to the complex web of inter-dependencies between Java features such as object monitors, wait/notify, thread creation/termination, atomics, unsafe, etc.  the decision to replace any concurrency primitive opens up a Pandora's box of special cases to handle, inevitably leaving loose ends in the limit.

In contrast, \tool only requires identifying a set of core concurrency primitives (defined in the language manual~\cite{JLS-MemoryModel}) to wrap around with shadow locks. By not having to re-implement any concurrency primitives, the application can fully inter-operate with the JVM, whose code we do not need to control. Our evaluation shows that \tool can therefore be applied to real-world software in a \emph{push-button} fashion.

\subsection{Reflection on Applying a State-of-the-Art Search Algorithm to Real-World Targets}


Our original implementation of \tool{} supported random walk, PCT, and POS as scheduling strategies. When we encountered the SURW paper~\cite{zhao25-surw}, which \crrm{will be}\cradd{was recently} presented at ASPLOS'25, we were happy to find that implementing the algorithm in \tool was relatively straightforward, requiring only 8 human hours and $\sim$200 lines of code. To validate our implementation of SURW, we manually translated the examples illustrated in the original paper+artifact and compared the standard deviation of schedules sampled by both implementations, confirming their similarity. Following the original authors' suggestion, we randomly sampled a fixed set memory locations (20) and marked all their access events as interesting. However, this approach proved insufficient for detecting bugs like KAFKA-17379 discussed in Section~\ref{sec:case-study}. Real-world concurrency tests involve numerous memory events---the KAFKA-17379 test contained 91 distinct shared memory locations accessed by multiple threads across 2782 different program locations, with only a few related to the failure. Random sampling likely misses these critical memory locations, causing SURW to miss bugs. When we manually marked the relevant state as interesting, SURW successfully identified the failure. This confirms that identifying interesting events is crucial for SURW's bug-finding effectiveness and highlights the need for automated methods to identify these events as future research directions~\cite{zhao25-surw}.











\section{Conclusion} 

This paper observes that practical concurrency testing of JVM targets requires a careful evaluation of design choices to maximize both the scope of target applications and the effectiveness of finding bugs, quickly.  We presented \tool{}, a new platform for concurrency testing of data-race-free JVM programs. \tool{} introduces \emph{shadow locking}, a concurrency control mechanism that orchestrates thread interleavings without replacing existing concurrency primitives, while still encoding their semantics to faithfully express the set of all possible program behaviors. \tool{} identifies a sweet spot in our design trade-off space. 
Our empirical evaluation demonstrated that \tool is effective at general-purpose concurrency testing, and can find real concurrency bugs in mature software projects. \tool serves as a bridge between concurrency research and software engineering practice---allowing researchers to evaluate algorithms on industrial codebases while giving developers access to state-of-the-art testing techniques.

\section{Acknowledgements}

\cradd{
This work was supported in-part by an Amazon Research Award and in-part by the National Science Foundation under grants CCF-2120955, CCF-2429384, and CCF-2453432. We thank Ankush Desai for early discussions that inspired this work, as well as Yuvraj Agarwal and Vyas Sekar for feedback on prior versions of
the paper. We use Claude AI to generate the state graphs in Figs.~\ref{fig:stream_execution_trace}~and~\ref{fig:lucene_trace}.}

\section{Data-Availability Statement}

\crrm{\tool is already open-sourced and being used by developers. (about \diffrm{80}\diffadd{280} GitHub stars at the time of writing). We \diffrm{intend to participate in the artifact evaluation once the work is accepted}\diffadd{are currently participating in artifact evaluation}. The anonymized version of \tool is available at \url{https://anonymous.4open.science/r/fray-E8FD/README.md}.} We have made available scripts and data to reproduce our evaluation at \url{https://github.com/cmu-pasta/fray-benchmark}~\cite{fray-artifact}. Appendix~\ref{sec:buglist} lists the issue IDs for bug reports corresponding to our findings in Section~\ref{sec:eval-applicability}.

\bibliographystyle{ACM-Reference-Format}
\bibliography{references}

\begin{appendix}
\section{Formal Semantics}\label{sec:formal-semantic}

\newcommand{\tuple}[1]{\langle #1 \rangle}
\renewcommand{\larrow}[1]{\overset{#1}{\longrightarrow}}
\newcommand{\spsarrow}{\xrightharpoondown{}}
\newcommand{\lspsarrow}[1]{\xrightharpoondown{#1}}
\newcommand{\Act}{\textit{Act}}
\newcommand{\Tid}{\textit{Tid}}
\newcommand{\PC}{\textit{PC}}
\newcommand{\pc}{\textit{pc}}
\newcommand{\Ins}{\textit{Ins}}
\newcommand{\ins}{\textit{ins}}
\newcommand{\Lid}{\textit{Lid}}
\newcommand{\Cid}{\textit{Cid}}
\newcommand{\exec}{\textit{exec}}
\newcommand{\local}{\textit{local}}
\newcommand{\Local}{\textit{Local}}
\newcommand{\Thread}{\textit{Thread}}
\newcommand{\Atomic}{\textit{Atomic}}
\newcommand{\update}{\textit{update}}
\newcommand{\calM}{\mathcal{M}}
\newcommand{\true}{\textit{true}}
\newcommand{\false}{\textit{false}}
\newcommand{\none}{\textit{none}}
\newcommand{\init}{\textit{init}}
\newcommand{\finparfun}{\rightharpoonup_{\textit{fin}}}
\newcommand{\dom}[1]{\mathrm{dom}(#1)}

\begin{definition}[Formal Model of Java Programs]
A Java program can be modeled as a tuple
\[
    P = \tuple{\Ins, \Tid, \PC, 
    \calM, 
    \Lid,
    \Cid,
    \exec, 
    \update},
\]
\begin{itemize}
    \item  \(\Ins = \Local \cup \Atomic \cup 
        \{\texttt{threadStart}\} \times \PC \\
        \cup \{ \texttt{new Lock}, \texttt{lock}, \texttt{unlock} \}
        \times \Lid 
        \cup \{\texttt{new  Cond \_ from \_}\} \times \Cid \times \Lid \\
        \cup \{\texttt{await}, \texttt{signal}, \texttt{signalAll} \} \times \Cid 
            \cup \{\texttt{undef} \}\)
        is a set of instructions.
        The sets $\Local$ and $\Atomic$ represent all local and atomic
        instructions, respectively. 
        A local instruction may read or write local variables, while an atomic
        instruction may access atomic variables in addition to local variables.
        For simplicity, we do not model each of these instructions and instead
        assume abstract properties for them (see
        Proposition~\ref{prop:data-race-freedom}
        and \ref{prop:local-atomic-commute}).
        Assuming sequential consistency, volatile memory access is treated as an
        atomic instruction.
        We plan to refine this model to include additional synchronization
        primitives:
        \texttt{wait}, \texttt{notify}, \texttt{notifyAll},
        \texttt{LockSupport.park()/unpark()}, and \texttt{Thread.interrupt()}.
    \item \(Tid\) is a set of thread identifiers
    with a special value \(\textit{main} \in \Tid\)
    denoting the identifier for the main thread.
    \item \(\PC\) is a set of program counters
    with specials value \(\pc_{\textit{none}}, \pc_{\textit{init}} \in \PC\),
    which denote thread termination and the initial program counter,
    respectively.
    \item \(\calM\) is a set of memory states
    with a special value \(M_{\textit{init}} \in \calM\)
    denoting the initial memory configuration.
    \item \(\Lid\) and \(\Cid\) are sets of lock identifiers
    and condition identifiers, respectively.
    \item \(\exec : \PC \to \Ins \times \PC \) is an execution function which
    takes a program counter and returns the instruction and the next program
    counter.
    \item \(\update : \Ins \times \mathcal{M} \to \mathcal{M}\) represents the
    effect of an instruction on a memory state.
\end{itemize}

\end{definition}

\begin{definition}[Some Notations]
    We use
    \(X \finparfun Y\) to denote the set of finite partial functions from \(X\) to \(Y\).
    For a finite partial function \(F \in X \finparfun Y\),
    we use \([F \mid x \to y]\) to denote the function that is the same as \(F\)
    except that it maps \(x\) to \(y\).
    That is, it updates the value of \(F(x)\) to \(y\),
    if \(x \in \dom{F}\), and otherwise, it adds a new mapping \(x \mapsto y\)
    to \(F\).
    We use \(F \setminus x\) to denote the function that is the same as \(F\)
    except that it is undefined for \(x\),
    i.e., it removes the mapping \(x \mapsto y\) from \(F\), if \(x \in
    \dom{F}\), and otherwise, it is the same as \(F\).
\end{definition}

\begin{definition}[Formal Model of Transition System Induced by a Java Program]
Given a java program \(P\), we define a transition system induced by \(P\) as
follows:
\[
    \mathcal{T}_P = \tuple{S, \Act, \longrightarrow, s_{\init}}.
\]
\begin{itemize}
    \item \(S\) is a set of states
    consisting of tuples of the form \(\tuple{T, M, L, C}\).
    \begin{itemize}
        \item \(T \in \Tid \finparfun \PC\) is a finite mapping from thread
        identifiers to program counters
        \item \(M \in \mathcal{M}\) is a memory state
        \item \(L \in \Lid \finparfun \Tid \times \mathbb{N}_{\ge 0}\) is a
        finite mapping from lock identifiers to lock states.
        Each lock state is a tuple \(\tuple{t, n}\) where \(t \in \Tid \cup
        \{\none\}\) is the thread that holds the lock and \(n\) is the hold
        counter.
        \footnote{We need a hold counter because Java locks are reentrant.}
        \item \(C \in \Cid \finparfun \Lid \times 2^{\Tid}\)
        is a finite mapping from condition identifiers to condition states.
        Each condition state is a tuple \(\tuple{r, S, H}\),
        where
        \(r \in \Lid\) is the
        lock associated with the condition, \(S \subseteq \Tid\) is a set
        of threads that are waiting on the condition,
        and \(H \in \Tid \finparfun \mathbb{N}_{\ge 0} \)
        is a finite partial function recording the hold counter of 
        the lock for each thread yielded by calling \texttt{await}.
    \end{itemize}
    \item $\Act$ is a set of actions of the form
    $\tuple{t, e}$ where \(t \in \Tid\)
    is a thread identifier
    and \(e \in  
    \{\it local, atomic, \\ thread\_start, 
     lock\_create, lock, unlock, cond\_create, 
     await, awake, awake\_spurious, signal\_all \} \cup \{signal\} \times \Tid
     \)
    is an effect representing the type of transition.
    \item \(\longrightarrow \subseteq S \times \Act \times S\) 
    is a transition relation defined below (ref. Definition~\ref{def:transition})
    \item \(s_{\init} = \tuple{T_0, M_0, L_0, C_0} \in S\)
    is the initial state
    with \(T_0 = \{\textit{main} \mapsto \pc_{\textit{init}}\}\),
    \(M_0 = M_{\textit{init}}\),
    \(L_0 = C_0 = \emptyset\).
\end{itemize}

Throughout the document, we assume that a Java program \(P\) is given, and we use
\(\mathcal{T}\) to denote the transition system induced by it.
\end{definition}

\begin{definition}[Thread function]
    We use a function $\textit{thread} : \Act \to \Tid$
    to denote the thread identifier associated with each action,
    that is, $\textit{thread}(\tuple{t, e}) = t$.
\end{definition}

\begin{definition}[Transition]
    \label{def:transition}
    \(\longrightarrow \in S \times \textit{Act} \times S\)
    is a transition relation defined as follows.
    We write \(s \larrow{\alpha} s'\) to denote \((s, \alpha, s') \in \longrightarrow\).
    When the action is not used, we denote this by
    \(s \longrightarrow s'\).
    That is,
    \(\longrightarrow = \{(s,s') \in S \times S \mid \exists \alpha. s \larrow{\alpha} s'\}\).
    For notational simplicity,
    we use a function \(\exec_T : \Tid \to \Ins \times \PC\),
    defined as \(\exec_T(t) = \exec(T(t))\)
    for each \(T \in \Tid \finparfun \PC\).

    \begin{description}
        \small
        \vspace{1em} 
        \item \(\infer[local]
        {\tuple{T, M, L, C}
        \larrow{\tuple{t, \textit{local}}} 
        \tuple{[T \mid t \mapsto \pc], M', L, C}}
        {t \in \dom{T} & \exec_T(t) = \tuple{\ins, \pc} & \ins \in \Local & \update(\ins, M) = M'}\)

        (Executes a local instruction)
        \vspace{1em} 

        \item \(\infer[atomic]
        {\tuple{T, M, L, C} 
        \larrow{\tuple{t, \textit{atomic}}}
        \tuple{[T \mid t \mapsto \pc], M', L, C}}
        {t \in \dom{T} & \exec_T(t) = \tuple{\ins, \pc} & \ins \in \Atomic & \update(\ins, M) = M'}\)
        
        (Executes an atomic instruction)
        \vspace{1em} 

        \item \(\infer[thread\_start]
        {\tuple{T, M, L, C}
        \larrow{\tuple{t, \textit{thread\_start}}}
        \tuple{[T \mid t \mapsto \pc, t' \mapsto \pc'], M, L, C}}
        {t \in \dom{T} & \exec_T(t) = \tuple{\texttt{new Thread}\ (\pc'), \pc}
        & t' \text{ is a fresh identifier}}\)
        
        (Creates a new thread)
        \vspace{1em} 

        \item \(\infer[lock\_create]{
            (T, M, L, C) 
            \larrow{\tuple{t, \textit{lock\_create}}}
            ([T \mid t \mapsto \pc], M, [L \mid r \mapsto (\none ,0)], C)
        }{
            t \in \dom{T}
            &
            \exec_T(t) = (\texttt{new Lock}\ (r), \pc)
            & r \text{ is a fresh identifier}
        }\)
        
        (Creates a new lock)
        \vspace{1em} 

        \item \(\infer[lock_1]{
            \tuple{T, M, L, C}
            \larrow{\tuple{t, \textit{lock}}} 
            \tuple{[T \mid t \mapsto \pc], M, [L \mid r \mapsto \tuple{t, 1}], C}
        }{
            t \in \dom{T}
            &
            \exec_T(t) = \tuple{\texttt{lock}\ (r), \pc}
            & L(r) = \tuple{\none, 0}
        }\)
        
        (Newly acquires a lock)
        \vspace{1em} 
        
        \item \(\infer[lock_2]{
            \tuple{T, M, L, C}
            \larrow{\tuple{t, \textit{lock}}}
            \tuple{[T \mid t \mapsto \pc], M, [L \mid r \mapsto \tuple{t, n+1}], C}
        }{
            t \in \dom{T}
            &
            \exec_T(t) = \tuple{\texttt{lock}\ (r), \pc}
            & L(r) = \tuple{t, n}
            & n \ge 1
        }\)
        
        (Re-acquires a lock held by the current thread)
        \vspace{1em} 

        \item \(\infer[unlock_1]{
            \tuple{T, M, L, C}
            \larrow{\tuple{t, \textit{unlock}}}
            \tuple{[T \mid t \mapsto \pc], M, [L \mid r \mapsto \tuple{\none, 0}], C}
        }{
            t \in \dom{T}
            &
            \exec_T(t) = \tuple{\texttt{unlock}\ (r), \pc}
            &
            & L(r) = \tuple{t, 1}
        }\)
        
        (Releases a lock)
        \vspace{1em} 

        \item \(\infer[unlock_2]{
            \tuple{T, M, L, C}
            \larrow{\tuple{t, \textit{unlock}}}
            \tuple{[T \mid t \mapsto \pc], M, [L \mid r \mapsto \tuple{t, n-1}], C}
        }{
            t \in \dom{T}
            &
            \exec_T(t) = \tuple{\texttt{unlock}\ (r), \pc}
            &
            & L(r) = \tuple{t, n}
            &
            n > 1
        }\)
        
        (Decrements the hold count of a currently acquired lock)
        \vspace{1em} 

        \item \(\infer[cond\_create]{
            \tuple{T, M, L, C}
            \larrow{\tuple{t, \textit{cond\_create}}}
            \tuple{[T \mid t \mapsto \pc], M, L, [C \mid c \mapsto \tuple{r,
            \emptyset, \emptyset}]}
        }{
            t \in \dom{T}
            &
            \exec_T(t) = \tuple{\texttt{new Cond}\ (c)\ \texttt{from}\ (r), \pc}
            & c \text{ is a fresh identifier}
        }\)

        (Creates a new condition)
        \vspace{1em} 

        \item \(\infer[await]{
            \tuple{T, M, L, C}
            \larrow{\tuple{t, \textit{await}}} 
            \tuple{T, M, [L \mid r \mapsto (\none, 0}], [C \mid c \mapsto
            \tuple{r, S \cup \{t\}, [H \mid t \mapsto n]}])
        }{
            t \in \dom{T}
            &
            \exec_T(t) = \tuple{\texttt{await}\ (c), \pc}
            & C(c) = \tuple{r, S, H}
            & L(r) = \tuple{t, n}
            & n \ge 1
        }\)

        (Puts the current thread to sleep)
        \vspace{1em} 

        \item \(\infer[awake]{
            \tuple{T, M, L, C}
            \larrow{\tuple{t, \textit{awake}}}
            \tuple{[T \mid t \mapsto \pc], M, [L \mid r \mapsto \tuple{t, n}],
            [C \mid
            c \mapsto \tuple{r, S, H \setminus t}]}
        }{
        \begin{array}{l}
            t \in \dom{T} \quad
             \exec_T(t) = \tuple{\texttt{await}\ (c), \pc}\quad
              C(c) = \tuple{r, S, H} \\
             t \notin S \quad
             H(t) = n \quad
             L(r) = \tuple{\none, 0}
        \end{array}
        }\)

         (Awakes from waiting)
        \vspace{1em} 
         
        \item \(\infer[awake\_spurious]{
        \begin{array}{l}
            \tuple{T, M, L, C}
            \larrow{\tuple{t, \textit{awake\_spurious}}}
            S'\\
            S' =
            \tuple{[T \mid t \mapsto \pc], M, [L \mid r \mapsto \tuple{t, n}],
            [C \mid
            c \mapsto \tuple{r, S \setminus \{t\}, H \setminus t}]}
        \end{array}
        }{
        \begin{array}{l}
            t \in \dom{T} \quad
            \exec_T(t) = \tuple{\texttt{await}\ (c), \pc} \quad
             C(c) = \tuple{r, S, H} \\
             t \in S \quad
             H(t) = n \quad
             L(r) = \tuple{\none, 0}
        \end{array}
        }\)

         (Awakes spuriously)
        \vspace{1em} 

        \item \(\infer[signal]{
            \tuple{T, M, L, C}
            \larrow{\tuple{t, \tuple{\textit{signal}, t'}}}
            \tuple{T, M, L, [C \mid c \mapsto \tuple{r, S \setminus \{t'\}, H}]}
        }{
            t \in \dom{T}
            &
            \exec_T(t) = \tuple{\texttt{signal}\ (c), \pc}
            &
            C(c) = \tuple{r, S, H}
            & t' \in S
            & L(r) = \tuple{t, n}
            & n \ge 1
        }\)

        (Removes a thread from the waiting queue. 
        The thread calling signal must hold the associated lock.)
        \vspace{1em} 

        \item \(\infer[signal\_none]{
            \tuple{T, M, L, C}
            \larrow{\tuple{t, \tuple{\textit{signal}, t'}}}
            \tuple{T, M, L, [C \mid c \mapsto \tuple{r, \emptyset, \emptyset}]}
        }{
        \begin{array}{l}
            t \in \dom{T} \quad
            \exec_T(t) = \tuple{\texttt{signal}\ (c), \pc} \quad
            C(c) = \tuple{r, S, H} \\
             S = \emptyset \quad
             L(r) = \tuple{t, n} \quad
             n \ge 1 \quad
             t' \in \dom{T}
        \end{array}
        }\)

        (This is similar to \(\textit{signal}\), but in this case there is no
        thread to awake)
        \vspace{1em} 

        \item \(\infer[signal\_all]{
            \tuple{T, M, L, C}
            \larrow{\tuple{t, \textit{signal\_all}}}
            \tuple{T, M, L, [C \mid c \mapsto \tuple{r, \emptyset, H}]}
        }{
            t \in \dom{T}
            &
            \exec_T(t) = \tuple{\texttt{signalAll}\ (c), \pc}
            &
            C(c) = \tuple{r, S, H}
            & L(r) = \tuple{t, n}
            & n \ge 1
        }\)

        (Removes all threads from the waiting queue. 
        The thread calling signal must hold the associated lock.)
        \vspace{1em} 

    \end{description}
    
\end{definition}

We target data-race-free programs. If a program is free of data races, no two
threads write to the same local variable. Consequently, two local instructions
from different threads commute with each other.
This can be formalized as follows:
\begin{proposition}[Formalizing Data Race Freedom Assumption]
    \label{prop:data-race-freedom}
    Let \(s = \tuple{T, M, L, C}\) be a state.
    For any \(t, t' \in \dom{T}\) with \(t \ne t'\),
    if \(\exec_T(t) = \tuple{\ins, \pc}\), \(\exec_T(t') = \tuple{\ins',
    \pc'}\),
    and \(\ins, \ins' \in \Local\),
    then \(\update(\ins', \update(\ins, M)) = \update(\ins, \update(\ins', M))\) (or both are undefined).
\end{proposition}

Similarly, a local instruction and an atomic instruction from different threads
commute because if they write to the same variable, it must be a local
variable. Thus, the assumption of data race freedom can be applied in the same
manner.
\begin{proposition}[Commutativity of Local and Atomic Instructions]
    \label{prop:local-atomic-commute}
    Let \(s = \tuple{T, M, L, C}\) be a state.
    For any \(t, t' \in \dom{T}\) with \(t \ne t'\),
    if \(\exec_T(t) = \tuple{\ins, \pc}\), \(\exec_T(t') = \tuple{\ins',
    \pc'}\),
    \(\ins \in \Local\), and \(\ins' \in \Atomic\)
    then \(\update(\ins', \update(\ins, M)) = \update(\ins, \update(\ins', M))\) (or both are undefined).
\end{proposition}

\section{Correctness of Sync-Point Scheduling}\label{sec:correctness}

\subsection{Definitions}

A trace is a sequence of states in the transition system. This is the main building block of our work.
\begin{definition}[Trace]
    A trace of the transition system is a sequence of states
    \(s_0, s_1, \dots, s_n\)
    and actions \(\alpha_0, \alpha_1, \dots, \alpha_{n-1}\)
    satisfying
    \(s_i \larrow{\alpha_i} s_{i+1}\)
    for each \(i = 0, 1, \dots, n-1\).
    We denote a trace by
    \[\pi = s_0 \larrow{\alpha_0} s_1 \larrow{\alpha_1} \dots
    \larrow{\alpha_{n-1}} s_n.\]
    If the actions are not used in the proof,
    we omit them and write:
     \[\pi = s_0 \longrightarrow s_1 \longrightarrow \dots \longrightarrow s_n\] 
     for simplicity.
    We treat \(s_0\) as a trace of length 0.
\end{definition}

\begin{definition}[Subtrace]
    Given a trace
    \[
        \pi = s_0 \larrow{\alpha_1} s_1 \larrow{\alpha_2} \dots \larrow{\alpha_n} s_n,
    \]
    and \(0 \le i \le j \le n\),
    we use \(\pi_{i,j}\) to denote the subtrace
    \[
         s_i \larrow{\alpha_{i+1}} s_{i+1} \larrow{\alpha_{i+2}} \dots \larrow{\alpha_j} s_j.
    \]
\end{definition}

\begin{definition}[Concatenation of Traces]
    For two traces
    \[
        \pi = 
        s \longrightarrow \dots \longrightarrow s'
    \]
    and
    \[
        \pi' = 
        s' \longrightarrow \dots \longrightarrow s'',
    \]
    we say by the concatenation of \(\pi\) and \(\pi'\)
    to denote the concatenation of the sequence:
    \[
        s \longrightarrow \dots \longrightarrow s' \longrightarrow \dots \longrightarrow s''.
    \]
\end{definition}

The following are key lemmas required to justify definitions 
(Definition~\ref{def:next-action} and \ref{def:action-application})
below.
\begin{lemma}[Transition is Action Deterministic]
    \label{lem:transition-action-deterministic}
    Given a state \(s\) and an action \(\alpha\),
    there exists at most one \(s'\)
    such that \(s \larrow{\alpha} s'\).
\end{lemma}
\begin{proof}
    Let \(s = \tuple{T, M, L, C}\). Suppose to the contrary that there exist two 
    states \(s_1\) and \(s_2\) such that \(s \larrow{\tuple{t, e}} s_1\) and
    \(s \larrow{\tuple{t, e}} s_2\).

    First, assume that  \(s \larrow{\tuple{t, e}} s_1\)  and
    \(s \larrow{\tuple{t, e}} s_2\) are derived by the same rule.

    Observe that for all rules except \(\textit{signal}\)
    and \(\textit{signal\_none}\),
    \(t \in \dom{T}\) is the only variable in the premise
    that is not determined by the source state \(s\).
    Since the \(t\) is fixed for given \(\alpha\),
    we must have \(s_1 = s_2\).

    For the \(\textit{signal}\)
    and \(\textit{signal\_none}\) rules,
    \(t \in \dom{T}\)  and \(t' \in S \) are the only variables
    that are not determined by \(s\).
    Since \(t\) and \(t'\) are fixed for given \(\alpha\),
    we have \(s_1 = s_2\).

    Now, assume that  \(s \larrow{\tuple{t, e}} s_1\)  and
    \(s \larrow{\tuple{t, e}} s_2\) are derived by different rules.
    We iterate over the pairs of rules that derive transitions with the same action:
    \begin{description}
        \item[\(\textit{lock}_1\) and \(\textit{lock}_2\)]
        Suppose that \(s \larrow{\tuple{t, e_1}} s_1\)
        and \(s \larrow{\tuple{t, e_2}} s_2\)
        are derived by the rule \(\textit{lock}_1\)
        and \(\textit{lock}_2\), respectively.
        Note that \(r\)
        used in the rules are uniquely determined by
        \(\ins\) and \(t\) is also unique.
        Then  \(L(r) = \tuple{\none, 0}\)
        and \(L(r) = \tuple{t, 1}\) by the premise of the rules,
        respectively.
        This is a contradiction.

        \item[\(\textit{unlock}_1\) and \(\textit{unlock}_2\)]
        Suppose that \(s \larrow{\tuple{t, e_1}} s_1\)
        and \(s \larrow{\tuple{t, e_2}} s_2\)
        are derived by the rule \(\textit{unlock}_1\)
        and \(\textit{unlock}_2\), respectively.
        Note that \(r\)
        used in the rules are uniquely determined by
        \(\ins\) and \(t\) is also unique.
        By the premise of the rules,
        we have \(L(r) = \tuple{t,1}\)
        and \(L(r) = \tuple{t,n}\) for some \(n > 1\),
        which is a contradiction.

        \item[\(\textit{signal}\) and \(\textit{signal\_none}\)]
        Suppose that \(s \larrow{\tuple{t, e_1}} s_1\)
        and \(s \larrow{\tuple{t, e_2}} s_2\)
        are derived by the rule \(\textit{signal}\)
        and \(\textit{signal\_none}\), respectively.
        Given \(s\) and \(t\),
        let \(s = \tuple{T, M, L, C}\),
     \(\exec_T(t) = \tuple{\texttt{signal} (c), \pc}\),
        and \(C(c) = \tuple{r, S, H}\).
        The \(\textit{signal}\) and \(\textit{signal\_none}\)
        rules are applicable only if \(S \ne \emptyset\) 
        and \(S = \emptyset\).
        Therefore, \(S\) (and therefore \(t\))
        uniquely determines the rule that is applicable to \(s\),
        so we must have \(e_1 = e_2\), which is a contradiction.

    \end{description}

\end{proof}

Using Lemma~\ref{lem:transition-action-deterministic},
we can define useful notations as follows:

\begin{definition}[Enabled Actions]
    For a state \(s\),
    the set of enabled actions \(\textit{enabled}(s)\)
    is defined as follows:
    \begin{align*}
        \textit{enabled}(s) = \{ \alpha \in \Act \mid \exists s'. s \larrow{\alpha} s' \}
    \end{align*}
\end{definition}

\begin{definition}[Next Actions for a Thread]
    \label{def:next-action}
    We use \(\textit{next}(s, t)\) to denote the set of actions
    \((t, e)\) enabled in \(s\).
    That is,
    \(\textit{next}(s,t) = \{ \alpha  \in \textit{enabled}(e) \mid \alpha =
    (t,e) \text{ for some } e \}\).
\end{definition}

\begin{definition}[Application of an Action to a State]
    \label{def:action-application}
    By Lemma~\ref{lem:transition-action-deterministic},
    for a state \(s\) and \(\alpha \in \textit{enabled}(s)\),
    there exists a unique state \(s'\) such that \(s \larrow{\alpha} s'\).
    we use \(\alpha(s)\) to denote the state \(s'\).
    If such a state does not exist, 
    i.e., if \(\alpha \notin \textit{enabled}(s)\),
    we write \(\alpha(s) = \textit{undefined}\).
\end{definition}

The following definitions are used to formalize the main contribution of our work.
\begin{definition}[Blocking and Non-blocking Actions]
    We say that an action \(\alpha = \tuple{t, e}\)
    is blocking if \(e \in \{ \textit{atomic}, 
    \textit{lock},
    \textit{awake},
    \textit{awake\_spurious}
    \}\).
    Otherwise, we say that \(\alpha\) is non-blocking.
\end{definition}

\begin{definition}[Blocking and Unreliable Actions]
    We define 
    \(\Act_{\textit{non-blocking}}\)
    and
    \(\Act_{\textit{blocking}}\)
    to be the set of all non-blocking and blocking actions, respectively.

    We define \(\Act_{\textit{unreliable}} = \{
    (t, e) \in \Act \mid e = \textit{awake\_spurious} \text{ for some } t'
    \}\),
    in order to handle spurious wakeups separately.
    
\end{definition}

\begin{definition}[Thread-Switching Point and Candidate Point]
    For a trace
    \[
        \pi = s_0 \larrow{\alpha_0} s_1 \larrow{\alpha_1} \dots \larrow{\alpha_{n-1}} s_n,
    \]
    we say that \(s_i\) is a thread-switching point
    if \(i > 0\)
    and \(\textit{thread}(\alpha_{i-1}) \ne \textit{thread}(\alpha_i)\).
    We say that \(s_i\) is a (thread-switching) candidate point
    if \(\textit{next}(s_i, thread(\alpha_{i-1})) \subseteq \Act_{\textit{blocking}}\).
\end{definition}

Note that \(\textit{next}(s_i, thread(\alpha_{i-1}))\)
is the set of next possible actions of the thread \(thread(\alpha_{i-1})\)
from the state \(s_i\).
It is not necessarily for \( \alpha_{i} \in \textit{next}(s_i, thread(\alpha_{i-1}))\)
to be true.

\begin{definition}[Sync-Point-Scheduled Trace (SPS trace)]
    Let
    \(\pi = s_0 \larrow{\alpha_0} s_1 \larrow{\alpha_1} \dots \larrow{\alpha_{n-1}} s_n\)
    be a trace of the transition system
    where \(\alpha_i = \tuple{t_i, e_i}\) for each \(i\).
    We say that \(\pi\)
    is a sync-point-scheduled trace (SPS trace) if for each \(i\),
    \(s_i\) is a thread-switching point only if it is a candidate point.
\end{definition}

The notion of independence, taken from the partial order reduction literature,
is used as a basic building block in our proofs.
\begin{definition}[Independence]
    We say that two actions \(\alpha, \beta \in \Act\)
    are independent if for any state \(s \in S\)
    with \(\alpha, \beta \in \textit{enabled}(s)\),
    \(\alpha \in \textit{enabled}(\beta(s))\),
    \(\beta \in \textit{enabled}(\alpha(s))\),
    and \(\alpha(\beta(s)) = \beta(\alpha(s))\).
\end{definition}

This is a key definition used to formalize properties that we want to prove
(ref. Theorem~\ref{thm:thread-local-assertion-preservation}).
\begin{definition}[thread-local assertion]
    We say that a function \(a : \calM \to \mathbb{B}\)
    is local to thread \(t\) if 
    for any state \(s\), \(s'\), and action \(\alpha\) 
    such that \(s \larrow{\alpha} s'\),
    if \(\alpha\) is an non-blocking action of thread \(t' \in \Tid \setminus \{t\}\),
    then \(a(s) = a(s')\).
    We say that $a$ is thread-local if it is local to some thread $t \in \Tid$.
\end{definition}

\subsection{Main Theorems}
The following theorems are the main result of this supplementary material.

\begin{theorem}[Terminal State Preservation]
    \label{thm:terminal-state-preservation}
    For each trace \[\pi = s_0 \longrightarrow s_1 \longrightarrow \dots \longrightarrow s_n,\]
    if \(\textit{enabled}(s_n) \subseteq \Act_{\textit{unreliable}}\),
    there exists an SPS trace 
    \[\pi' = s_0 \larrow{\alpha_0} s_1' \larrow{\alpha_1} \dots
    \larrow{\alpha_{m-1}} s_m'\]
    where \(s_n = s_m'\).
\end{theorem}

\begin{proof}
    By Lemma~\ref{lem:sps-split},
    we have a trace
    \[
        \pi'' = s_0 \larrow{\alpha_0} \dots \larrow{\alpha_{n-1}} s_n'
    \]
    with an index \(m < n\)
    such that 
    \(\pi''_{0,m}\) is a SPS trace
    and \(\pi''_{m,n}\) consists of non-blocking actions.
    
    Let \(s_{m'}\)
    be the last candidate point in \(\pi''_{0,m}\)
    (we take \(m' = 0\) if no such index exists).
    Observe that \(\pi''_{m',n}\) has only non-blocking actions.
    By Lemma~\ref{lem:grouping},
    we may obtain a reordered trace of \(\pi''_{m',n}\):
    \[
        \pi''' = s_{m'} \larrow{\beta_0} s_1' \larrow{\beta_1} \dots \larrow{\beta_{n-m'}} s_{n-m'}'
    \]
    with a set of thread-switching points
    \(\{s_{i_1}', \dots s_{i_k}'\}\),
    where \(0 < i_1 < \dots < i_k < n-m'\)
    and \(\textit{thread}(\beta_l) \ne \textit{thread}(\beta_{i_j-1})\)
    for each \(j = 1, \dots, k\) and \(i_j \le l < n-m'\).

    Since \(\textit{enabled}(s_{n-m'}) = \textit{enabled}(s_n) \subseteq \Act_{\textit{unreliable}}\),
    we can apply Lemma~\ref{lem:undefined-preservation}
    to a subtrace \(\pi'''_{i_j, n-m'}\)
    for each \(j = 1, \dots k\)
    to show that 
    \(\textit{next}(s_{i_j}, \textit{thread}(\beta_{i_j-1})) \subseteq
    \Act_{\textit{unreliable}}\). 
    Since \(\Act_{\textit{unreliable}} \subseteq \Act_{\textit{blocking}}\),
    this implies that each \(s_{i_j}\) is a candidate point.
    Therefore, by concatenating \(\pi''_{ 0,m' }\) and \(\pi'''\), we get the
    desired SPS trace.
\end{proof}

\begin{theorem}[Thread-Local Assertion Preservation]
    \label{thm:thread-local-assertion-preservation}
    Let \(a\) be a thread-local assertion.
    For each trace 
    \[
        \pi = s_0 \longrightarrow s_1 \longrightarrow \dots \longrightarrow s_n,
    \]
    there exists an SPS trace 
    \[\pi' = s_0 \longrightarrow s_1' \longrightarrow \dots \longrightarrow s_m'\]
    such that
    \(a(s_n) = a(s_m')\).
\end{theorem}

\begin{proof}
    Suppose that \(a\) is local to thread \(t\).
    Consider a trace
    \[
        \pi'' = s_0 \larrow{\alpha_0} \dots \larrow{\alpha_{n-1}} s_n'
    \]
    and let \(m'\) be defined as in the proof of
    Theorem~\ref{thm:terminal-state-preservation}.
    By Lemma~\ref{lem:dropping}, we may choose a subtrace \(\pi'''\) of
    \(\pi''_{m',n}\) such that \(\pi'''\) starts with \(s_m'\), contains only actions
    of \(t\), and ends with a state \(s'\) such that \(a(s_n') =
    a(s')\).  
    By concatenating \(\pi''_{0,m'}\) and \(\pi'''\), we get the
    desired trace.
\end{proof}

\subsection{Lemmas}
The following are lemmas used in the proof of the main theorems.

\begin{lemma}[Reordering of Independent Actions]
    \label{lem:reorder-independent-actions}
    Let \(s\) be a state
    and \(\alpha \in \textit{enabled}(s)\) be an action.
    Suppose that we have a trace
    \[
    \pi = s \larrow{\beta_0} s_1 \larrow{\beta_1} \dots \larrow{\beta_{n-2}}
    s_{n-1} \larrow{\beta_{n-1}} s_n \larrow{\alpha} s'
    \]
    where \(\alpha\) and \(\beta_i\) are independent for each \(i\).
    Then, there exists a trace
    \[
     \pi' = s \larrow{\alpha} s_1' \larrow{\beta_0} \dots \larrow{\beta_{n-2}}
     s_n' \larrow{\beta_{n-1}} s''
    \]
    such that \(s' = s''\).
    
    Similarly, if we have a trace
    \[
     \pi = s \larrow{\alpha} s_1 \larrow{\beta_0} \dots \larrow{\beta_{n-2}} s_n
     \larrow{\beta_{n-1}} s',
    \]
    we can find a trace
    \[
    \pi' = s \larrow{\beta_0} s_1' \larrow{\beta_1} \dots \larrow{\beta_{n-2}}
    s_{n-1}' \larrow{\beta_{n-1}} s_n' \larrow{\alpha} s''
    \]
    with \(s' = s''\).
\end{lemma}
\begin{proof}
    We prove this by induction on \(n\).
    For the base case, if \(n = 0\),
    we can take \(\pi' = \pi\).
    For the induction step,
    assume that the lemma holds for \(n \ge 0\)
    and consider a trace
    \[
     \pi = s \larrow{\beta_0} s_1 \larrow{\beta_1} \dots \larrow{\beta_n}
     s_{n+1} \larrow{\alpha} s'.
    \]
    By the induction hypothesis applied to the subtrace \(\pi_{1, n+2}\),
    we obtain a trace
    \[
        \pi'' = s_1 \larrow{\alpha} s_1' \larrow{\beta_0} \dots
        \larrow{\beta_n} s_{n+1}'
    \]
    such that \(s_{n+1}' = s'\).
    Given that \(\alpha \in \textit{enabled}(s)\)
    and that \(\beta_0\) and \(\alpha\) are independent,
    it follows that
    \(s_1' = \alpha(\beta_0(s)) = \beta_0(\alpha(s))\).
    Therefore, we can construct the trace
    \[
        \pi' = s \larrow{\alpha} s_1 \larrow{\beta_0} s_1' \larrow{\beta_1}
        \dots \larrow{\beta_{n-1}} s_n',
    \]
    which completes the proof.

    The proof for the reverse direction proceeds similarly.
\end{proof}

\begin{lemma}[Undefinedness Preservation by Non-blocking Actions]
    \label{lem:undefined-preservation}
    Let 
    \[
        \pi = s_0 \larrow{\alpha_0} s_1 \larrow{\alpha_1} \dots \larrow{\alpha_{n-1}} s_n
    \]
    be a trace and \(t \in \Tid\) be a thread identifier.
    If 
    \(\textit{next}(s_n, t) \subseteq \Act_{\textit{unreliable}}\)
    and \(\alpha_i\) is a non-blocking action such that
    \(\textit{thread}(\alpha_i) \ne t\)
    for all \(i = 0, 1, \dots, n-1\).
    Then, \(\textit{next}(s_0, t) = \Act_{\textit{unreliable}} \).
\end{lemma}
\begin{proof}
    The proof proceeds by induction on \(n\).
    For the base case, if \(n = 0\),
    the result is immediate since \(s_0 = s_n\)
    and \(\textit{next}(s_0, t) \subseteq \Act_{\textit{unreliable}} \).
    For induction step, assume that the lemma holds for some \(n \ge 0\).
    Consider a trace
    \[
        \pi = s_0 \larrow{\alpha_0} s_1 \larrow{\alpha_1} 
        \dots \larrow{\alpha_{n-1}} s_n \larrow{\alpha_n} s_{n+1}
    \]
    such that 
    \(\textit{next}(s_n, t) \subseteq \Act_{\textit{unreliable}}\)
    and \(\textit{thread}(\alpha_i) \ne t\) for all \(i = 1, \dots, n\).
    By the induction hypothesis applied to the subtrace \(\pi_{1,n+1}\),
    we have \(\textit{next}(s_1, t) \subseteq \Act_{\textit{unreliable}}\).
    Suppose, to the contrary, that \(\alpha \in \textit{next}(s_0, t) \)
    for some \(\alpha \in \Act \setminus \Act_{\textit{unreliable}}\).
    By Lemma~\ref{lem:independence-non-blocking-blocking-actions}
    and Lemma~\ref{lem:independence-non-blocking-actions},
    \(\alpha_0\) and \(\alpha\) are independent.
    Consequently, by the definition of independence,
    \(\alpha \in \textit{enabled}(s_1)\),
    which contradicts the assumption that \(\textit{next}(s_1, t) = \Act_{\textit{unreliable}}\).
    Therefore, it must be the case that \(\textit{next}(s_0, t) = \Act_{\textit{unreliable}} \).
\end{proof}

\begin{lemma}[Grouping Actions of a Single Thread]
    \label{lem:trace-split}
    Let
    \[
        \pi = s_0 \larrow{\alpha_0} s_1 \larrow{\alpha_1} \dots \larrow{\alpha_{n-1}} s_n
    \]
    be a trace
    where \(\alpha_0, \dots, \alpha_{n-1}\)
    are non-blocking actions.
    For any thread identifier \(t \in \Tid\),
    \(\pi\) can be reordered into a trace
    \[
        \pi' = s_0 \larrow{\beta_0} s_1' \larrow{\beta_1} \dots \larrow{\beta_{n-1}} s_n'
    \]
    where 
    \(s_n = s_n'\)
    and there exists an index \(j\) such that
    \(\textit{thread}(\beta_i) = t\) for all \(0 \ge i < j\)
    and \(\textit{thread}(\beta_i) \ne t\) for all \(j \le i < n\).
\end{lemma}
\begin{proof}
    This can be proved by induction on the number of 
    actions \(\alpha_i\) such that \(\textit{thread}(\alpha_i) = t\)
    in the trace \(\pi\).
    If there is no such action, then we can take
    \(\pi' = \pi\) and \(j = 0\) .
    Now, suppose that the lemma holds for \(n\)
    and \(\pi\) has \(n+1\) actions of thread \(t\).
    Let \(\alpha_k\) be the \(n+1\)-th such action.
    Then, the subtrace \(\pi_{0, k}\) has \(n\)
    actions of thread \(t\), so we can apply the induction hypothesis
    to get
    \[
        \pi'' = s_0 \larrow{\beta_0} s_1' \larrow{\beta_1} \dots \larrow{\beta_{k-1}} s_k'
    \]
    such that \(s_k = s_k'\)
    and there exists \(j'\) such that
    \(\textit{thread}(\beta_i) = t\) for all \(0 \le i < j'\)
    and \(\textit{thread}(\beta_i) \ne t\) for all \(j' \le i < k\).
    Consider a trace 
    \[
     s_{j'}' \larrow{\beta_{j'}} s_{j'+1}' \larrow{\beta_{j'+1}} \dots
        \larrow{\beta_{k-1}} s_k' = s_k \larrow{\alpha_k} s_{k+1},
    \]
    which is a concatenation of $\pi''_{j', k}$ and $s_k \larrow{\alpha_k} s_{k+1}$.
    Since 
    \(\textit{thread}(\beta_i) \ne t\),
    \(\beta_i\) and \(\alpha_k\) are independent
    for all \(j' \le i < k\).
    Therefore, we may apply Lemma~\ref{lem:reorder-independent-actions}
    to obtain a trace
    \[
        \pi''' = s_{j'}' \larrow{\alpha_k} \dots \larrow{\beta_{k-1}} s_{k+1}.
    \]
    The concatenation of \(\pi''_{0, l}\), \(\pi'''\),
    and \(\pi_{k+1, n}\)
    is the desired trace with $j = j' + 1$.
\end{proof}

\begin{lemma}[Reordering a Trace into a SPS trace and Non-blocking Actions]
    \label{lem:sps-split}

    For any trace
    \[
        \pi = s_0 \larrow{\alpha_0} s_1 \larrow{\alpha_1} \dots \larrow{\alpha_{n-1}} s_n,
    \]
    there exists a trace
    \[
        \pi' = 
        s_0 \larrow{\beta_0} s_1' \larrow{\beta_1} 
        \dots \larrow{\beta_{m-1}} s_m'
        \larrow{\gamma_0} s_1'' \larrow{\gamma_1} \dots \larrow{\gamma_{n-m-1}} s_{n-m}''
    \]
        such that the subtrace
        \(
            \pi'_{0,m} = 
        s_0 \larrow{\beta_0} s_1' \larrow{\beta_1} 
        \dots \larrow{\beta_{m-1}} s_m'
        \)
        is an SPS trace,
        \(s_n = s_{n-m}''\),
        and \(\gamma_i\) is an non-blocking action
        with \(\textit{thread}(\gamma_i) \ne \textit{thread}(\beta_{m-1})\)
        for each \(i = 0, \dots, n-m-1\).
        In the case of $m = 0$,
        each $\gamma_i$ can be any non-blocking action.
\end{lemma}
\begin{proof}
    We prove this by induction on the number of blocking actions in \(\pi\).
    If there is no blocking action in \(\pi\),
    then we can take \(\pi' = \pi\) with $m = 0$.

    Now, suppose that the lemma holds for \(n\)
    and that \(\pi\) has \(n+1\) blocking actions.
    Let \(\alpha_k\) be the \(n+1\)-th blocking action in \(\pi\).
    We can apply the induction hypothesis to the subtrace \(\pi_{0,k}\)
    to get a trace
    \[
        \pi'' = 
        s_0 \larrow{\beta_0} s_1' \larrow{\beta_1} 
        \dots \larrow{\beta_{m'-1}} s_{m'}'
        \larrow{\gamma_0} s_1'' \larrow{\gamma_1} \dots \larrow{\gamma_{k-m'-1}} s_{k-m'}''
    \]
    where the subtrace
    $s_0 \larrow{\beta_0} s_1' \larrow{\beta_1}  \dots \larrow{\beta_{m'-1}}
    s_{m'}'$ is a SPS trace,
    $\gamma_i$ are non-blocking actions with 
    with \(\textit{thread}(\gamma_i) \ne \textit{thread}(\beta_{m-1})\)
    ($m$ cannot be $0$ because there exists a blocking action),
    and $s_{k-m'}'' = s_k$
    By applying Lemma~\ref{lem:trace-split} to the subtrace \(s'_{m'}
    \larrow{\gamma_0} \dots \larrow{\gamma_{k-m'-1}} s''_{k-m''} = s_k\) with
    \(t = \textit{thread}(\beta_{m'-1})\),
    we get a trace \(\pi''' = s_{m'}' \longrightarrow \dots \longrightarrow s_k\)
    with an index \(0 \le l \le k\)
    such that all actions in \(\pi'''_{0,l}\) are of thread \(t\)
    and all actions in \(\pi'''_{l, k-m'}\) are not of thread \(t\).
    The concatenation of
    \(\pi''_{0, m'}\), \(\pi'''\), and \(\pi_{k, n}\)
    is the desired trace with $m = m' + l$.
\end{proof}

\begin{lemma}[Assertion Preservation Under Dropping Other Thread Actions]
    \label{lem:dropping}
    Let
    \[
        \pi = s_0 \larrow{\alpha_0} s_1 \larrow{\alpha_1} \dots \larrow{\alpha_{n-1}} s_n
    \]
    be a trace such that each \(\alpha_i\) is a non-blocking action.
    For any assertion local to thread \(t\),
    there exists a trace
    \[
        \pi' = s_0 \larrow{\beta_0} s_1' \larrow{\beta_1} \dots \larrow{\beta_{m-1}} s_m'
    \]
    such that \(a(s_n) = a(s_m')\),
    \(\textit{next}(s_n, t) = \textit{next}(s_m', t)\),
    and \(\textit{thread}(\beta_i) = t\)
    for all \(i = 0, \dots, m-1\).
\end{lemma}
\begin{proof}
    We prove this by induction on \(n\).
    For the base case,
    if \(n = 0\),
    we have \(\pi' = s_0\),
    which satisfies the condition of the lemma.
    Now, 
    for the induction step,
    suppose that the lemma holds for \(n \ge 0\)
    and consider a trace
    \[
        \pi = s_0 \larrow{\alpha_0} s_1 \larrow{\alpha_1} \dots \larrow{\alpha_{n-1}} s_n
        \larrow{\alpha_n} s_{n+1}.
    \]
    By the induction hypothesis,
    there exists a trace
    \[
        \pi'' = s_0 \larrow{\beta_0} s_1' \larrow{\beta_1} \dots \larrow{\beta_{m-1}} s_m'
    \]
    such that \(a(s_n) = a(s_m')\).
    If \(\textit{thread}(\alpha_n) \ne t\),
    then we can take \(\pi' = \pi''\).
    Since \(a\) is local to thread \(t\),
    we have
    \(a_(s_m') = a(s_n) = a(s_{n+1})\).
    Otherwise,
    we have 
    \(\alpha_n \in \textit{enabled}(s_m')\)
    by the induction hypothesis,
    so we can take \(\pi'\) to
    be the concatenation of \(\pi''\) and \(s_m' \larrow{\alpha_n} s_{n+1}\).
\end{proof}

\begin{lemma}[Grouping Non-blocking Actions by Threads]
    \label{lem:grouping}
    Let
    \[
        \pi = s_0 \larrow{\alpha_0} s_1 \larrow{\alpha_1} \dots \larrow{\alpha_{n-1}} s_n
    \]
    be a trace such that each \(\alpha_i\) is a non-blocking action.
    Then, we can reorder \(\pi\) into a trace 
    \[
        \pi' = s_0 \larrow{\beta_0} s_1' \larrow{\beta_1} \dots \larrow{\beta_{m-1}} s_n'
    \]
    with the following properties:
    \begin{enumerate}
        \item \(s_n = s_n'\)
        \item The multiset of \(\{\alpha_0, \dots \alpha_{n-1}\}\)
        is equal to the multiset of \(\{\beta_0, \dots, \beta_{m-1}\}\)
        \item
        there exists a set of thread-switching points
        \(\{s_{i_1}', \dots s_{i_k}'\}\),
        where \(0 < i_1 < \dots < i_k < n\)
        and \(\textit{thread}(\beta_l) \ne \textit{thread}(\beta_{i_j-1})\)
        for each \(j = 1, \dots, k\) and \(i_j \le l < n\).
        That is, once a thread is switched away,
        the context is never returned back to it.
    \end{enumerate}
\end{lemma}
\begin{proof}
    This can be proved by induction on \(n\).
    For the base case,
    if \(n = 0\), we can take \(\pi' = \pi\).

    Otherwise, suppose that the lemma holds for \(n \ge 0\)
    and consider a trace
    \[
        \pi = s_0 \larrow{\alpha_0} s_1 \larrow{\alpha_1} \dots \larrow{\alpha_{n-1}} s_n
        \larrow{\alpha_n} s_{n+1}.
    \]
    By applying the induction hypothesis to the subtrace $\pi_{0, n}$, 
    there exists a trace
    \[
        \pi'' = s_0 \larrow{\beta_0} s_1' \larrow{\beta_1} \dots \larrow{\beta_{n-1}} s_n'
    \]
    satisfying the conditions of the lemma.

    Let \(\textit{thread}(\alpha_n) = t\).
    If \(\textit{thread}(\beta_i) \ne t\)
    for all \(i < n\),
    i.e., $\alpha_n$ is an action from a new thread,
    then take \(\pi'\)
    to be the concatenation of \(\pi''\) and \(s_n' \larrow{\alpha_n} s_{n+1}\).
    The first two conditions are satisfied by the construction.
    We also have a new set of thread-switching points,
    by adding \(s_n'\) to the set of thread-switching points in \(\pi''\).

    Otherwise, $\alpha_n$ is an action of a previously executed thread.
    In this case, we reorder $\alpha_n$ so that the actions of $t$
    are grouped.
    Formally, let \(k\) be the largest index such that
    \(\textit{thread}(\beta_k) = t\).
    By Lemma~\ref{lem:reorder-independent-actions},
    we have a trace
    \[
        \pi''' = s_k' \larrow{\alpha_n} s_1'' \larrow{\beta_{k+1}} \dots \larrow{\beta_{n-1}} s_{n-k}''
    \]
    with \(s_{n-k}'' = s_{n+1}\).
    By concatenating \(\pi''_{0,k}\) and \(\pi'''\),
    we get the desired trace.
    The first two conditions are satisfied by the construction,
    and the last condition is satisfied by
    replacing \(s_k\) with \(s_{k+1}\)
    in the set of thread-switching points in \(\pi''\).
\end{proof}

\subsection{Low-level Lemmas}
These are low-level lemmas that require
analysis of the transition rules to prove.

\begin{lemma}
    \label{lem:enable-disable-only}
    For a state \(s \in S\),
    if \(\alpha, \beta \in \textit{enabled}(s)\)
    and \(\textit{thread}(\beta) = t\),
    then 
    \(\textit{next}(\alpha(s), t) \subseteq \textit{next}(s, t)\).

    Intuitively speaking, 
    non-blocking actions do not change the next actions of other threads.
    Blocking actions can enable or disable other blocking actions,
    but they do not change the pc of other threads.
\end{lemma}
\begin{proof}
    This follows from a direct observation of the transition rules,
    which shows that none of the rules change the program counter of threads
    other than the thread that is marked in the action.
\end{proof}

\begin{lemma}[Independence of Non-blocking Actions]
    \label{lem:independence-non-blocking-actions}
    Let \(s \in S\) be a state and
    \(\alpha, \beta \in \textit{enabled}(s)\)
    where \(\alpha \ne \beta\).
    If \(\alpha\) and \(\beta\) are non-blocking actions,
    then they are independent.
\end{lemma}
\begin{proof}
    Let \(s = \tuple{T, M, L, C}\).
    Let \(\alpha = \tuple{t_1, e_1}\)
    and \(\beta = \tuple{t_2, e_2}\) 
    for some \(t_1, t_2, e_1\), and  \(e_2\).

    Let \(\exec_T(t_1) = \exec(\pc_1) = \tuple{\ins_1, \pc_1'}\)
    and  \(\exec_T(t_2) = \exec(\pc_2) = \tuple{\ins_2, \pc_2'}\).
    First, assume that \(t_1 \ne t_2\).
    We enumerate each case of \(e_1\) and \(e_2\),
    excluding symmetric cases.

    \begin{description}
        \item[\(e_1 = \local\) and \(e_2 = \local\)]
            Since \(e_1 = e_2 = \local\),
            we have \(\ins_1, \ins_2 \in \Local\).
            It follows from the transition rule that
             \(\alpha(s) = \tuple{[T \mid t_1 \mapsto \pc_1'], \update(\ins_1,
             M), L, C}\).
            Since \(t_1 \ne t_2\),
            we have
            \([T \mid t_1 \mapsto \pc_1'](t_2) = \pc_2\),
            and therefore
            \(\beta \in \textit{enabled}(\alpha(s))\).
            Furthermore, 
             \(\beta(\alpha(s)) = 
            \tuple{[[T \mid t_1 \mapsto \pc_1'] \mid t_2 \mapsto \pc_2'],
            \update(\ins_2, \update(\ins_1, M)), L, C}
            = 
            \tuple{[T \mid t_1 \mapsto \pc_1', t_2 \mapsto \pc_2'], \update(\ins_2,
            \update(\ins_1, M)), L, C}\).
            
            Similarly, we can show that \(\alpha \in \textit{enabled}(\beta(s))\)
            and that 
            \(\alpha(\beta(s)) = \tuple{[T \mid t_1 \mapsto \pc_1', t_2 \mapsto
            \pc_2'], \update(\\
            \ins_1, \update(\ins_2, M)), L, C}\).
            By Proposition~\ref{prop:data-race-freedom},
            we have
            \(\update(\ins_1, \update(\ins_2, M)) = \\
            \update(\ins_2, \update(\ins_1, M))\),
            which proves that \(\alpha(\beta(s)) = \beta(\alpha(s))\).
        \item[\(e_1 \in \local\) and
        \(e_2 \in \{\textit{thread\_create}, \textit{lock\_create},
        \textit{cond\_create}, \textit{unlock},
        \tuple{\textit{signal}, t'}, \textit{signal\_all}
        \}\)]
        Since these effects are all similar,
        updating the state in ways other than updating \(M\),
        we only consider the case \(e_2 = \textit{thread\_create}\).
        The other cases can be handled similarly.
        Let 
        \(\exec_T(t_2) = \exec(\pc_2) = \tuple{\texttt{thread\_create}\ (\pc_2''), \pc_2'}\).
        As in the previous case, we have \(\alpha(s) = \tuple{[T \mid t_1
        \mapsto \pc_1'], \update(\ins_1, M), L, C}\).
        Also, it follows from the transition rule that
        \(\beta(s) = \tuple{[T \mid t_2 \mapsto \pc_2', t'' \mapsto
        \pc_2''], M, L, C}\).
        Since \([T \mid t_1 \mapsto \pc_1'](t_2) =
        \pc_2\), \(\beta \in \textit{enabled}(\alpha(s))\).
        By the transition rule,
        we have \(\beta(\alpha(s)) =
        \tuple{[T \mid t_1 \mapsto \pc_1', t_2 \mapsto \pc_2', t'' \mapsto \pc_2''],
        \update(\ins_1, M), L, C}\).
        On the other hand,
        \(\beta(s) = 
        \tuple{[T \mid t_2 \mapsto \pc_2', t'' \mapsto \pc_2''],
        M, L, C}\).
        Since \(t_1 \ne t_2\),
        we have
        \([T \mid t_2 \mapsto \pc_2', t'' \mapsto \pc_2''](t_1) =
        \pc_1\), and therefore \(\beta \in \textit{enabled}(\alpha(s))\).
        By the transition rule \textit{local},
        we can show that \(\alpha(\beta(s)) = \beta(\alpha(s))\).
        \footnote{While \(t''\) is a fresh identifier
        that may differ between execution, 
        we assume for simplicity that \(t''\)
        introduced in \(\beta(s)\)
        and \(\beta(\alpha(s))\) are the same.
        This assumption does not affect correctness as 
        they are isomorphic.}

        \item[\(e_1 = \textit{thread\_create}\)]
        All cases can be handled in a similar manner.
        We show the case of 
        \(e_1 = \textit{thread\_create}\)
        and \(e_2 = \textit{thread\_create}\)
        as an example.
        
        Let
        \(\exec_T(t_1) = \exec(\pc_1) = \tuple{\texttt{thread\_create}\ (\pc_1''), \pc_1'}\)
        and
        \(\exec_T(t_2) = \exec(\pc_2) = \tuple{\texttt{thread\_create}\ (\pc_2''), \pc_2'}\)
        for some \(t, t''\).
        Then, 
        \(\alpha(s) = \tuple{[T \mid t_1 \mapsto \pc_1', t' \mapsto \pc_1''], M,\\ L, C}\)
        and
        \(\beta(s) = \tuple{[T \mid t_2 \mapsto \pc_2', t'' \mapsto \pc_2''], M, L, C}\).
        Since \([T \mid t_1 \mapsto \pc_1', t' \mapsto \pc_1''](t_2) =
        \pc_2\),
        \(\beta \in \textit{enabled}(\alpha(s))\),
        and \(\beta(\alpha(s)) = \tuple{[[T \mid t_1 \mapsto \pc_1', t' \mapsto
        \pc_1''] \mid t_2 \mapsto \pc_2', t'' \mapsto \pc_2''], M, L, C}\).
        Since \(t_1, t_2, t'\) and \(t''\) are all disjoint,
        we have \(\beta(\alpha(s)) = 
        ([T \mid t_1 \mapsto \pc_1', t_2 \mapsto
        \pc_2', t' \mapsto \pc_1'', t'' \mapsto \pc_2''])\)
        (as \(t' \ne t''\)).
        Similarly,
        we can show that \(\alpha \in \textit{enabled}(\beta(s))\)
        and \(\alpha(\beta(s)) = 
        ([T \mid t_1 \mapsto \pc_1', t_2 \mapsto
        \pc_2', t' \mapsto \pc_1'', t'' \mapsto \pc_2''])\).
        Therefore, \(\alpha\) and \(\beta\) are independent.

        \item[\(e_1 = \textit{unlock}\)]
        Most cases can be handled similarly
        to previous cases.
        The only nontrivial case is \(e_2 = \textit{unlock}\).
        Let \(\exec_T(t_1) = \exec(\pc_1) = \tuple{\texttt{unlock}\ (r_1), \pc_1'}\)
        and \(\exec_T(t_2) = \exec(\pc_2) = \tuple{\texttt{unlock}\ (r_2), \pc_2'}\).
        Given \(\alpha, \beta \in \textit{enabled}(s)\),
        we have \(L(r_1) = \tuple{t_1, n_1}\) and
        \(L(r_2) = \tuple{t_2, n_2}\)
        for some \(n_1\) and \(n_2\).
        It follows that \(r_1 \ne r_2\),
        and thus, the commutativity of these two actions
        can be proved similarly to the case 
        of \(e_1  = e_2 = \textit{thread\_start}\),
        using \(r_1 \ne r_2\) in place of \(t_1 \ne t_2\).

        \item[\(e_1 = \tuple{\textit{signal}, t'}\)] 
        We show two nontrivial cases. All other cases can be handled similarly to the above.

        \begin{description}
            \item[\(e_2 = \tuple{\textit{signal, t''}}\)]
            We must have \(t' \ne t''\),
            because otherwise we get \(\alpha = \beta\).
            Let \(\exec_T(t_1) = \exec(\pc_1) = \tuple{\texttt{signal}\ (c_1),
            \pc_1'}\)
            and \(\exec_T(t_2) = \exec(\pc_2) = \tuple{\texttt{signal}\ (c_2),
            \pc_2'}\). 
            Let \(C(c_1) = \tuple{r_1, S_1, H_1}\)
            and  \(C(c_2) = \tuple{r_2, S_2, H_2}\).
            If \(c_1 = c_2\),
            then we have 
             \(\alpha(\beta(s)) =
             \tuple{T, M, L, [C \mid c_1 \mapsto \tuple{r_1, S \setminus \{t', t''\}, H_1} ]}
             = \beta(\alpha(s))\).

            Otherwise,
            we have
             \(\alpha(\beta(s)) =
             \tuple{T, M, L, [[C \mid c_2 \mapsto \tuple{r_2, S \setminus \{t''\}, H_1}]
             \mid c_1 \mapsto \tuple{r_1, \emptyset, H_1} ]}
             = \tuple{T, M, L, [C \mid c_1 \mapsto \tuple{r_1, S \setminus \{t'\}, H_1}]}
             = \beta(\alpha(s))\).

            \item[\(e_2 = \textit{signal\_all}\)]
            Let \(\exec_T(t_1) = \exec(\pc_1) = \tuple{\texttt{signal}\ (c_1),
            \pc_1'}\)
            and \(\exec_T(t_2) = \exec(\pc_2) = \tuple{\texttt{signalAll}\ (c_2),
            \pc_2'}\). 
            Let \(C(c_1) = \tuple{r_1, S_1, H_1}\)
            and  \(C(c_2) = \tuple{r_2, S_2, H_2}\).
            If \(c_1 = c_2\),
            then we have 
             \(\alpha(\beta(s)) =
             \tuple{T, M, L, [C \mid c_1 \mapsto \tuple{r_1, S \emptyset, H_1} ]}
             = \beta(\alpha(s))\).

            Otherwise,
            we have
             \(\alpha(\beta(s)) =
             \tuple{T, M, L, [C \mid c_1 \mapsto \tuple{r_1, S \setminus
             \{t'\}, H_1}, c_2 \mapsto \tuple{r_2, \emptyset, H_2} ]}
             = \beta(\alpha(s))\).

        \end{description}

        \item[\(e_1 = \textit{signal\_all}\)] 
        We show the only nontrivial case, which is \(e_2 = \textit{signal\_all}\).
        
        Let 
        $$
        \exec_T(t_1) = \exec(\pc_1) = \tuple{\texttt{signalAll}\ (c_1), \pc_1'}
        $$
        and 
        $$
        \exec_T(t_2) = \exec(\pc_2) = \tuple{\texttt{signalAll}\ (c_2), \pc_2'}
        $$
        Let \(C(c_1) = \tuple{r_1, S_1, H_1}\)
        and  \(C(c_2) = \tuple{r_2, S_2, H_2}\).
        If \(c_1 \ne c_2\),
         we can check by expanding terms that 
         \(\alpha(\beta(s)) =
         \tuple{T, M, L, [C \mid c_1 \mapsto \tuple{r_1, \emptyset, H_1},
         c_2 \mapsto \tuple{r_2, \emptyset, H_2} ]}
         = \beta(\alpha(s))\).
         If \(c_1 = c_2\),
         then we get
         \(\alpha(\beta(s)) =
         \tuple{T, M, L, [[C \mid c_1 \mapsto \tuple{r_1, \emptyset, H_1}]
         \mid c_1 \mapsto \tuple{r_1, \emptyset, H_1} ]}
         = \tuple{T, M, L, [C \mid c_1 \mapsto \tuple{r_1, \emptyset, H_1}]}
         = \beta(\alpha(s))\).

    \end{description}
    
    Now, assume that \(t_1 = t_2\).
    Since all rules except \(\textit{signal}\)
    and \(\textit{signal\_none}\)
    are deterministic with respect to \(t\),
    we must have \(e_1 = (\textit{signal}, t')\)
    and \(e_2 = (\textit{signal}, t'')\)
    for some \(t', t''\) with \(t' \ne t''\).
    Let \(\exec_T(t_1) = \exec(\pc_1) = \tuple{\texttt{signal}\ (c_1), \pc_1'}\)
    and \(\exec_T(t_2) = \exec(\pc_2) = \tuple{\texttt{signal}\ (c_2),
    \pc_2'}\).
    Let \(C(c_1) = \tuple{r_1, S_1, H_1}\)
    and  \(C(c_2) = \tuple{r_2, S_2, H_2}\).
    
    If \(c_1 \ne c_2\),
    then we can check that 
    \(\alpha(s) = \tuple{T, M, L, [C \mid c_1 \mapsto \tuple{r_1, S \setminus
    \{t'\}, H_1}]}\) (no matter either \(\textit{signal}\) or
    \(\textit{signal\_none}\) was used)
    and that
    \(\beta(\alpha(s))
    = 
    \tuple{T, M, L, [C \mid c_1 \mapsto \tuple{r_1, S \setminus
    \{t'\}, H_1}, c_2 \mapsto \tuple{r_2, S \setminus \{t''\}, H_2}]}
    \)
    We can repeat the same step to confirm that \(\alpha(\beta(s))\)
    reduces to the same expression.
    
    If \(c_1 = c_2\),
    then we can check that 
    \(\alpha(s) = \tuple{T, M, L, [C \mid c_1 \mapsto \tuple{r_1, S \setminus
    \{t'\}, H_1}]}\) 
    and that
    \(\beta(\alpha(s))
    = 
    \tuple{T, M, L, [[C \mid c_1 \mapsto \tuple{r_1, S \setminus
    \{t'\}, H}] \mid c_2 \mapsto \tuple{r_1, S \setminus \{t''\}, H_1}]}
    = \tuple{
    T, M, L, [C \mid c_1 \mapsto \tuple{r_1, S \setminus \{t', t''\}, H_1}]
    }\).
    We can repeat the same step to confirm that \(\alpha(\beta(s))\)
    reduces to the same expression.

\end{proof}

\begin{lemma}[Independence of Non-blocking and Blocking Actions]
    \label{lem:independence-non-blocking-blocking-actions}
    Let \(s \in S\) be a state and
    \(\alpha, \beta \in \textit{enabled}(s)\).
    If \(\alpha\) is an non-blocking action and \(\beta\) is a blocking action,
    then they are independent.
\end{lemma}
\begin{proof}
    Let \(s = \tuple{T, M, L, C}\).
    We have \(\alpha = (t_1, e_1)\)
    and \(\beta = (t_2, e_2)\) 
    for some \(t_1, t_2, e_1, e_2\).

    We have \(t_1 \ne t_2\),
    because \(t_1 = t_2\)
    requires both \(e_1\) and \(e_2\)
    to be non-blocking actions, namely \(\tuple{\textit{signal}, t'}\).
    
    We enumerate each case of \(e_1\) and \(e_2\).
    \begin{description}
        \item[\(e_1 = \local\)] 
        We only consider the case of \(e_2 = \textit{atomic}\) here because
        the cases for \(e_2 = \textit{lock}\) and \(e_2 = \textit{await}\)
        can be handled similarly to the proof of
        Lemma~\ref{lem:independence-non-blocking-actions}.
        Let \(\exec_T(t_1) = \exec(\pc_1) = 
        \tuple{\ins_1, \pc_1'}\)
        and \(\exec_T(t_2) = \exec(\pc_2) = 
        \tuple{\ins_2, \pc_2'}\),
        where \(\ins_1 \in \Local\) and \(\ins_2 \in \Atomic\).
        We can use Proposition~\ref{prop:local-atomic-commute} in place of
        Proposition~\ref{prop:data-race-freedom} in the proof of
        Lemma~\ref{lem:independence-non-blocking-actions} to complete the proof.

        \item[\(e_1 \in \{\textit{thread\_start}, \textit{lock\_create},
        \textit{cond\_create} \}\)]
        These cases can be handled similarly to the proof of
        Lemma~\ref{lem:independence-non-blocking-actions}. 
        
        \item[\(e_1 = \textit{unlock}\)]
        Similar to other cases, the only nontrivial case is \(e_2 = \textit{lock}\).
        We show that \(\alpha\) and \(\beta\) act on different locks,
        and the remaining steps can be proved as in the proof of
        Lemma~\ref{lem:independence-non-blocking-actions}.
        Let \(\exec_T(t_1) = \exec(\pc_1) = 
        \tuple{\textit{unlock}\ (r_1), \pc_1'}\)
        and \(\exec_T(t_1) = \exec(\pc_1) = 
        \tuple{\textit{lock}\ (r_2), \pc_1'}\).
        Suppose to the contrary that =\(r_1 = r_2 = r\).
        By \(\textit{lock}_1\) and \(\textit{lock}_2\)
        rules,
        we must have \(L(r) = \tuple{t_1, n_1}\),
        but similarly
        we have \(L(r) = \tuple{t_2, n_2}\),
        for some \(n_1\) and \(n_2\)
        by \(\textit{unlock}_1\) and \(\textit{unlock}_2\).
        Since \(t_1 \ne t_2\), this is a contradiction.

        \item[\(e_1 \in \{\tuple{\textit{signal}, t''}, \textit{signal\_all}\}\)]
        The only nontrivial case is \(e_2 = \textit{await}\).
        As in the case of \(e_1 = \textit{unlock}\) and \(e_2 = \textit{lock}\),
        we can show that \(\alpha\) and \(\beta\) act on different condition
        variables. The remaining steps can be proven as in
        Lemma~\ref{lem:independence-non-blocking-actions}.
    \end{description}

\end{proof}

\section{Bugs Found by \tool}\label{sec:buglist}

\subsection{Lucene}
\begin{itemize}
\item \href{https://github.com/apache/lucene/issues/13547}{\#13547 Flaky Test in TestMergeSchedulerExternal\#testSubclassConcurrentMergeScheduler}
\item \href{https://github.com/apache/lucene/issues/13552}{\#13552 Test TestIndexWriterWithThreads\#testIOExceptionDuringWriteSegmentWithThr-eadsOnlyOnce Failed}
\item \href{https://github.com/apache/lucene/issues/13571}{\#13571 DocumentsWriterDeleteQueue.getNextSequenceNumber assertion failure seqNo=9 vs maxSeqNo=8}
\item \href{https://github.com/apache/lucene/issues/13593}{\#13593 ConcurrentMergeScheduler may spawn more merge threads than specified}
\end{itemize}

\subsection{Kafka}

\begin{itemize}
\item \href{https://issues.apache.org/jira/browse/KAFKA-17112}{\#17112 StreamThread shutdown calls completeShutdown only in CREATED state}
\item \href{https://issues.apache.org/jira/browse/KAFKA-17113}{\#17113 Flaky Test in GlobalStreamThreadTest\#shouldThrowStreamsExceptionOnStartupIfE-xceptionOccurred}
\item \href{https://issues.apache.org/jira/browse/KAFKA-17114}{\#17114 DefaultStateUpdater::handleRuntimeException should update isRunning before calling `addToExceptionsAndFailedTasksThenClearUpdatingAndPausedTasks`}
\item \href{https://issues.apache.org/jira/browse/KAFKA-17162}{\#17162 DefaultTaskManagerTest may leak AwaitingRunnable thread}
\item \href{https://issues.apache.org/jira/browse/KAFKA-17354}{\#17354 StreamThread::setState race condition causes java.lang.RuntimeException: State mismatch PENDING\_SHUTDOWN different from STARTING}
\item \href{https://issues.apache.org/jira/browse/KAFKA-17371}{\#17371 Flaky test in DefaultTaskExecutorTest.shouldUnassignTaskWhenRequired}
\item \href{https://issues.apache.org/jira/browse/KAFKA-17379}{\#17379 KafkaStreams: Unexpected state transition from ERROR to PENDING\_SHUTDOWN}
\item \href{https://issues.apache.org/jira/browse/KAFKA-17394}{\#17394 Flaky test in DefaultTaskExecutorTest.shouldSetUncaughtStreamsException}
\item \href{https://issues.apache.org/jira/browse/KAFKA-17402}{\#17402 Test failure: DefaultStateUpdaterTest.shouldGetTasksFromRestoredActiveTasks expected: <2> but was: <3>}
\item \href{https://issues.apache.org/jira/browse/KAFKA-17929}{\#17929 `awaitProcessableTasks` is not safe in the presence of spurious wakeups.}
\item \href{https://issues.apache.org/jira/browse/KAFKA-17946}{\#17946 Flaky test DeafultStateUpdaterTest::shouldResumeStandbyTask due to concurrency issue}
\item \href{https://issues.apache.org/jira/browse/KAFKA-18418}{\#18418 Flaky test in KafkaStreamsTest::shouldThrowOnCleanupWhileShuttingDownStream-ClosedWithCloseOptionLeaveGroupFalse}
\end{itemize}

\subsection{Guava}

\begin{itemize}
\item \href{https://github.com/google/guava/issues/7319}{\#7319 Lingering threads in multiple tests}
\end{itemize}
\end{appendix}
\end{document}